\renewcommand\footnotetextcopyrightpermission[1]{} 
\newcommand{\us}{\char`_}
\definecolor{theWhite}{gray}{1.0}
\newcommand{\DONE}[1]{}
\newcommand{\COMMENT}[1]{}
\newcommand{\ignore}[1]{}
\newcounter{programlinenumber}
\newcommand{\TrOnly}[1]{#1}
\newcommand{\SubOnly}[1]{}
\newcommand{\TrOnlyInFootnote}[1]{#1}
\newcommand{\TrOnlyInTable}[1]{#1}}
\newcommand{\TrOnly}[1]{}
\newcommand{\SubOnly}[1]{#1}
\newcommand{\TrOnlyInFootnote}[1]{}
\newcommand{\TrOnlyInTable}[1]{}}
\begin{document}
\title{Scaling Ordered Stream Processing on Shared-Memory Multicores}
\author{Guna Prasaad}
\affiliation{\institution{University of Washington}}
\email{guna@cs.washington.edu}

\author{G. Ramalingam}
\affiliation{\institution{Microsoft Research India}}
\email{grama@microsoft.com}

\author{Kaushik Rajan}
\affiliation{\institution{Microsoft Research India}}
\email{krajan@microsoft.com}

% The default list of authors is too long for headers.
\renewcommand{\shortauthors}{G. Prasaad et al.}

\keywords{stream processing systems, streaming dataflow graph, continuous queries, data parallelism, partitioned parallelism, pipeline parallelism, dynamic scheduling, ordered processing, runtime,  concurrent data structures}

\begin{abstract}
Many modern applications require real-time processing of large volumes of high-speed data. Such data processing needs can be modeled as a streaming computation. A streaming computation is specified as a dataflow graph that exposes multiple opportunities for parallelizing its execution, in the form of data, pipeline and task parallelism. On the other hand, many important applications require that processing of the stream be ordered, where inputs are processed in the same order as they arrive. There is a fundamental conflict between ordered processing and parallelizing the streaming computation. This paper focuses on the problem of effectively parallelizing ordered streaming computations on a shared-memory multicore machine. 
\par
We first address the key challenges in exploiting data parallelism in the ordered setting. We present a low-latency, non-blocking concurrent data structure to order outputs produced by concurrent workers on an operator. We also propose a new approach to parallelizing partitioned stateful operators that can handle load imbalance across partitions effectively and mostly avoid delays due to ordering. We illustrate the trade-offs and effectiveness of our concurrent data-structures on micro-benchmarks and streaming queries from the TPCx-BB~\cite{TPCxBB} benchmark. We then present an adaptive runtime that dynamically maps the exposed parallelism in the computation to that of the machine. We propose several intuitive scheduling heuristics and compare them empirically on the TPCx-BB queries. We find that for streaming computations, heuristics that exploit as much pipeline parallelism as possible perform better than those that seek to exploit data parallelism.     
\end{abstract}

\maketitle
\section{Introduction}
\label{sec:introduction}
Stream processing as a computational model has a long history dating back to Petri Nets in the 1960s, Kahn Process Networks and Communicating Sequential Processes in the 1970s, and Synchronous Dataflow in the 1980s \cite{Thies2009}. 
Practical applications of stream processing were, for a long time, limited to audio, video and digital signal processing that typically involves deterministic, high-performance computations. 
Several languages and compilers such as StreamIt~\cite{StreamIt}, Continuous Query Language(CQL)~\cite{CQL} and Imagine~\cite{Imagine} were designed to specify and optimize the execution of such programs on single and shared-memory architectures.
\par
The emergence of sensors and similar small-scale computing devices that continuously produce large volumes of data led to the rise of many new applications such as surveillance, fraud detection, environment monitoring, etc.
The scale and distributed nature of the problem spurred the interest of several research communities that further gave rise to large scale distributed stream processing systems such as Aurora \cite{Aurora}, Borealis \cite{Borealis}, STREAM \cite{STREAM} and TelegraphCQ \cite{TelegraphCQ}. 
\par
In today's highly connected world, data is of utmost value as it arrives. 
The advent of Big Data has further increased the importance of realtime stream processing. 
Several modern use-cases like shopping cart abandonment analysis, ad serving, brand monitoring on social media, and leader board maintenance in online games require realtime processing of large volumes of high-speed data. 
In the past decade, there has been a tremendous increase in the number of products (e.g. IBM Streams~\cite{IBMStreams}, Millwheel~\cite{MillWheel}, Spark Streaming~\cite{SparkStreaming}, Apache Storm~\cite{Storm}, S4~\cite{S4}, Samza~\cite{Samza}, Heron~\cite{Heron}, Microsoft Stream Insight~\cite{StreamInsight}, TRILL~\cite{Trill}) that cater to such data processing needs and is evidence of its ever-growing importance.
\par
In this paper, we focus on scaling stream processing on the shared-memory multicore architecture. 
We consider this an important problem for several reasons.
Streaming pipelines generally have a low memory footprint as most of the operators are either stateless or have a small bounded state.
With increasing main memory sizes and prevalence of multi-core architectures, the bandwidth and parallelism offered by a single machine today is often sufficient to deploy pipelines with large number of operators~\cite{IBMStreams}.
So, most streaming workloads can be efficiently handled in a single multicore machine without having to distribute it over a cluster of machines.
In fact, systems like TRILL~\cite{Trill} run streaming computations entirely on a single multicore machine at scales sufficient for several important applications.  
This is unlike batch processing systems, where the input, intermediate results and output data are often large and hence the pipeline needs to be split into many stages. 
\par 
Even in workloads where distribution across a cluster is important (e.g. for fault tolerance), typically the individual nodes in the cluster are shared-memory multicores themselves. 
Most distributed streaming systems \cite{SparkStreaming, Storm} today assume each core in a multicore node as an individual executor and fail to exploit the advantages of low overhead communication offered by shared-memory. 
We believe that considering a multicore machine as a single powerful node rather than as a set of independent nodes can help better exploit shared-memory parallelism. 
A streaming computation can be split into multiple stages and each stage can be deployed on a shared-memory node~\cite{IBMStreams}.
Most prior work in this area have not studied the shared-memory multicore setting in depth - they either focus on the single core \cite{AuroraScheduling, Chain2003, Jiang2004} or distributed shared-nothing architectures \cite{Borealis, Heron, SparkStreaming, MillWheel, Storm, Dataflow}.
\par
Further, we are interested in ordered stream processing. 
The stream of events/tuples usually have an associated notion of \emph{temporal ordering} such as in user click-streams from an online shopping session or periodic sensor readings. In many scenarios the application logic depends on this temporal order. For example, clustering click-streams into sessions based on timeouts between two consecutive events and computing time-windowed aggregates over streams of data. Implementing such logic on systems that do not provide ordered semantics is complicated and often turns out to be a performance bottleneck.
\par
Ordered processing also enables our parallelization framework to be deployed easily on individual multicore nodes in a distributed stream processing cluster. This guarantee is important especially when a large stream processing query is divided into sub-queries, each allotted to a multicore node and one of them contains a non-commutative operation. Moreover, fault tolerance techniques such as active replication depend on state of the pipeline on two replicas being the same and it cannot be guaranteed without ordered processing. 
\par
Many stream processing systems today provide mechanisms to support ordered stream processing. 
Most of them based on a micro-batch architecture~\cite{Trill, SparkStreaming, MillWheel, Borealis}, in which the input stream is broken down into streams of smaller batches and each batch is processed like in a batch processing system such as Map Reduce~\cite{MapReduce} or Apache Spark~\cite{Spark}. 
They support order sensitive pipelines by periodically sending \emph{watermarks} denoting that all events less than a specific timestamp have been received. 
However, These techniques are not suitable for latency critical applications mainly due to the batching delays. 
We show that it is possible to achieve this guarantee at much lower latencies without constraining execution of the pipeline excessively.

\subsection{Background and Challenges}
A \emph{streaming computation} can be specified as a dataflow graph, where each vertex is associated with an operator and directed edges represent flow of input into and out of the operators. At runtime, every vertex receives a stream of values (which we refer to as \emph{tuples} henceforth) along each of its incoming edges. These tuples are then processed by the operator to produce zero or more output tuples that are then sent along its outgoing edges. 
\par
A unary operator processes an input (of some type $T_{in}$) and produces a sequence of zero or more outputs (of some type $T_{out}$). Every vertex with a single incoming edge has an associated unary operator that specifies the computation to be performed at that vertex. \texttt{map}, \texttt{filter} and \texttt{windowed-aggregate} are examples of such unary operators. A vertex with $n$ edges abstractly represents an $n-$ary operator, with $n$ inputs of types $T_1, T_2, ..., T_n$.
In the streaming setting, the semantics of an $n$-ary operator too can be specified as a function that maps a tuple on a (specified) incoming edge to a sequence of zero or more output tuples.
\par 
Some operators are pure functions that do not have any state associated with its computation and hence called \emph{stateless operators}. Some operators have an internal state that is accessed and updated during the computation  - for example, \texttt{windowed-count} maintains the count of tuples in the current window as internal state. Such operators are called \emph{stateful operators} . In some cases, the operator accesses only a part of the state during the computation, which is pre-determined by a \emph{key} associated with every input tuple. These are called \emph{partitioned stateful operators}, as the state can be partitioned by the key. \texttt{windowed-group-by-count} is an operator of this type.

\subsubsection{Opportunities for Parallelization}
\begin{figure*}
	\centering
	\includegraphics[scale=0.34]{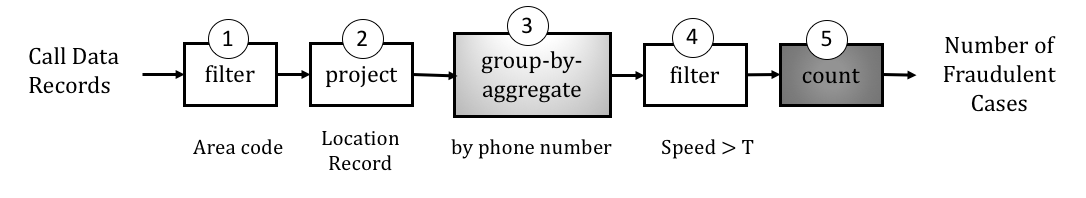}
	\caption{Algorithm for high-mobility fraud detection expressed as a streaming dataflow graph.}
	\label{fig:high-mobility-fraud-diagram}
\end{figure*}
A streaming dataflow graph exposes various opportunities for parallelizing the computation efficiently. We elucidate this using an example: figure \ref{fig:high-mobility-fraud-diagram} represents an algorithm to detect high-mobility fraud using call data records as a streaming dataflow graph.
\par
Call data records (CDR) are generated by every call between two mobile phones and it contains information such as time, duration of the call, location and phone number of the caller and the callee. In the detection algorithm, a CDR is first filtered (1, fig. \ref{fig:high-mobility-fraud-diagram}) on the interested area code and the caller/callee's time and location information is projected (2) as a record. These location records are then grouped by phone number to compute (3) the speed at which a user must have traveled between locations. Phone numbers that have a speed greater than $T$, are then filtered (4), and the number of such cases in a given time window are counted (5).
\par
An operator is said to be \emph{data parallel}, if its inputs can be processed concurrently. Stateless operators such as (1, 2, 4) in the example are data parallel. On the other hand, inputs to a partitioned stateful operator can be processed in parallel only if they belong to different partitions. Hence, they are said to exhibit \emph{partitioned parallelism}. In our example, computing the speed based on location records (3) for two different phone numbers can be done in parallel. Non-commutative stateful operators do not exhibit any data parallelism.  
\par
Further, when two operators are connected to each other such that the output of one forms the input to another, they are said to exhibit \emph{pipeline parallelism}. In that case, these two operators can be processed concurrently. For example, one worker can compute the speed (3) for a particular phone number, while another filters (4) some phone numbers based on the speed already computed and sent to be filtered. Finally, a dataflow graph also exposes \emph{task parallelism}. If two operators are not connected to each other via an input-output relationship, directly or indirectly, they can be processed concurrently. For example, operators on two sibling nodes in a DAG exhibit this kind of parallelism.

\subsubsection{Ordered Processing}
Ordered processing specifies that processing of inputs to a streaming pipeline must be semantically equivalent to executing them serially one at a time in the order of their arrival. We achieve this by ensuring that each individual operator implementation guarantees ordered processing and hence by extension any pipeline built by composing these implementations provide the ordering guarantee. 
\par
There is a fundamental conflict between data parallelism and ordered processing. Data parallelism seeks to improve the throughput of an operator by letting more than one worker operate on the inputs from the worklist concurrently. On the other hand, ordered processing requires to process them in the order of their arrival. The key observation here is that depending on the type of the operator, a concurrent execution might still be semantically equivalent to a serial single-threaded execution. 
\par
Ordered processing for a stateful operator is straightforward as its maximum allowed degree of parallelism is 1. In case of stateless and partitioned stateful operators, however, multiple workers process inputs concurrently and so they need special constructs to ensure that their concurrent execution is equivalent to a serial single-threaded execution. 
\par
There are essentially two kinds of ordering requirements that must be handled correctly. The first kind is \emph{processing order}: for some operators we need to ensure that the processing logic of the operator is executed on the inputs in the same order as they arrive. This is a key requirement for non-commutative stateful operators. On the other hand, there is no such constraint for stateless operators. Partitioned stateful operators present an interesting middle-ground where it is enough to guarantee that tuples with the same key are processed in their arrival order. 
\par 
The second kind of requirement is \emph{output ordering}, which specifies that the outputs of an operator are sent to the downstream operator in the same order as its inputs. In particular, even when inputs $i_1$ and $i_2$ can be processed concurrently (when they belong to different partitions or when the operator is stateless), we still need to ensure that the outputs $o_1$ and $o_2$ produced by these inputs respectively are sent out in the right order. We guarantee this property for both stateless and partitioned stateful operators using special concurrent data structures. We describe a low-overhead, non-blocking solution to this problem in Sec. \ref{sec:reordering-scheme}.
\par
Output ordering is innately a blocking constraint: even if $o_2$ is produced before $o_1$, it gets blocked until $o_1$ is produced and sent downstream. This manifests as an implicit advantage for parallelization schemes that processes inputs from the worklist \emph{almost} in the order of their arrival even though the semantics does not impose a restriction on this order. For stateless operators, having a shared worklist directly enables this execution pattern. However, it is non-trivial to achieve this for partitioned stateful operators. We present an adaptive partitioning scheme that supports this notion of almost ordered processing in Sec. \ref{sec:partitioned-parallelism}.
 
\subsection{Contributions}
In this paper, we make the following contributions:  
\begin{enumerate}[leftmargin=\parindent]
\item (Sec. \ref{sec:reordering-scheme}) We present a low-overhead non-blocking reordering scheme to order outputs of an operator that are produced concurrently. 
We observe that it scales better than a standard lock-based scheme and overall provides better throughput for long pipeline queries.
\item (Sec. \ref{sec:partitioned-parallelism}) We propose a novel scheme for exploiting partitioned parallelism in the ordered setting. 
We observe that our scheme achieves better speedup than the predominantly used strategy for partitioned parallelism during partition-induced skews and mostly avoids delay due to ordering constraints leading to much lower latencies. 
\item (Sec. \ref{sec:dynamic-scheduling}) We propose several intuitive scheduling heuristics that can be used to dynamically schedule operators at runtime. We identify a single heuristic that produces the best throughput and near best latency.  
\item (Sec. \ref{sec:evaluation}) We evaluate our runtime on streaming queries from TPCx-BB\cite{TPCxBB} and demonstrate that we can provide a throughput of millions of tuples per second on some queries with latency in the order of few milliseconds. 
\end{enumerate}

\section{Solution Overview}
\label{sec:overview}
A generalized solution model for executing a stream processing query comprises of two components: a \emph{compiler} and a \emph{runtime}. The compiler is responsible for static optimizations, while the runtime takes this compiled representation and executes it on the machine, potentially with dynamic optimizations. 
\par
The relative roles of the compiler and runtime are determined by the type of streaming computation. For example, the streaming computations in signal processing are deterministic, and operator characteristics (such as per-tuple processing cost, selectivity) are known a priori. Such workloads provide more opportunities for static compiler optimizations, and the runtime is a straightforward execution of the produced scheduling plan. This model is exemplified by systems like StreamIt \cite{StreamIt} and Brook\cite{Brook}. In other applications like monitoring, fraud detection or shopping cart analysis there is little to no information about the operator characteristics at compile-time and hence the scope of static optimizations are fewer. So, systems like Borealis\cite{Borealis} and STREAM \cite{STREAM} designed for these workloads rely heavily on dynamic optimizations. However, even in such dynamic workloads there is some scope for static optimizations like coarsening of operators, pushing up filters. Refer \cite{Survey2014} for a detailed catalog of such optimizations.
\par
In our system, we target dynamic workloads to support use-cases that have risen in many new Big Data applications. We assume that a stream processing query is initially compiled into an optimal pipeline using some of the known techniques. We then deploy this optimal version of the pipeline on a runtime that seeks to efficiently parallelize its execution with the ordered processing guarantee. Here we focus only on the design of runtime, as the compilation stage is quite well studied in earlier works. We limit our discussion to linear chain pipelines, which is the predominant structure present in most stream processing queries. We believe our ideas can be generalized to other DAG structures as well, but we do not specifically address them here. 

\subsection{Problem Definition} 
The system accepts a pipeline that consists of operators connected to each other as a linear chain. The operators are specified to be either of stateless, stateful or partitioned stateful type. In case of partitioned stateful operators, the user also specifies a key selector that can be used to associate tuples with keys and a partitioning strategy such as hash or range partitioning to further map keys to partitions. The goal of the runtime is to execute this linear pipeline efficiently on a shared-memory multicore machine by exploiting various forms of parallelism as described in Sec. \ref{sec:introduction}. 
\par
There are two dimensions of performance for a stream processing system that we are interested in. First is the throughput, by which we refer to the number of tuples processed to completion every second. The second dimension is latency. There are two notions of latency prevalent in the literature: end-to-end latency, which is the time duration between entry at ingress and exit at egress, and processing latency, which is the time since the first operator begins processing a tuple until exit at egress. The difference between them is that end-to-end latency includes the time spent by the tuple in the input queue for the overall pipeline. In the rest of this paper, we refer to processing latency when we say latency. The objective here is to maximize the throughput to handle high-speed data while minimizing the latency to process them in realtime.

\subsection{Runtime Design}  
Our runtime is based on an \emph{asynchronous} model of execution. We first decouple the pipeline into individual operators and compile them to independently schedulable units, one for every operator. We do this by associating every operator with a worklist(s). Inputs to an operator are simply added to its worklist instead of executing the operator logic synchronously.  When the operator is scheduled, it obtains inputs from the worklist, processes them and adds the outputs to the worklist of the downstream operator. 
\par
The goal of the runtime is to choose which operator to choose, at what time and on which core? The two essential components of our runtime are \emph{worker threads} and the \emph{scheduler} data structure. The worker threads are the work horses of our runtime and responsible for advancing the progress of operators. Worker threads periodically query the scheduler for work. A worker, when allotted to an operator, dequeues an input from the operator's worklist, performs the operation and adds the output(s) produced to worklist of the next operator in the pipeline. A worker is specified with the maximum number of tuples to process in an operator and when allotted it processes as many tuples before deciding which operator to work on next. The worker additionally collects runtime information about each operator such as number of inputs consumed, outputs produced, time taken to process them. This is then used to estimate operator characteristics like average per-tuple processing cost and average selectivity. 
\par
Scheduling decisions regarding which operator must be scheduled next are made by a central scheduler data structure. This decision is made using estimated operator characteristics, current worklist sizes, and possibly observed throughput and latency measurements. We achieve this using scheduling heuristics - we discuss several of them in Sec. \ref{sec:dynamic-scheduling}. When a heuristic chooses to schedule two different operators on different cores, it seeks to exploit pipeline parallelism. When it schedules the same operator on different cores it exploits data parallelism ingrained in the operator. Overall the goal of the scheduler is use to \emph{dynamically} determine an ideal combination of data and pipeline parallelism among operators to achieve optimal performance.
\par
The scheduler in our runtime can dynamically schedule more than one worker on an operator. This is applicable only to data or partition parallel operators as the maximum degree of parallelism allowed by a stateful operator is 1. The implementation of these operators internally handle the required concurrency control to ensure correct and ordered processing (refer Sec. \ref{sec:correctness}). This is unlike many other architectures~\cite{IBMStreams}, where a single logical operator is replicated into a statically determined number of physical operators that are then scheduled independently. 
\section{Reordering Scheme}
\label{sec:reordering-scheme}
In this section, we handle the problem of ordering outputs produced by concurrent workers before they are sent to downstream operator(s). Most prior solutions to this problem are restricted to the micro-batching architecture: the input tuple stream is considered as a stream of batches, where tuples in a batch are executed in parallel and their outputs are finally sorted before sending them downstream. The notion of batching has some advantages including amortizing the cost involved in sorting, admitting columnar-based and operator-specific batch optimizations. However, these solutions are predominantly known to trade off latency for throughput. Our approach seeks to perform this reordering incrementally using low overhead non-blocking concurrent data structures.
\par
For stateless and partitioned stateful operators, multiple workers can consume inputs from their worklist producing outputs concurrently. Each input is associated with a unique serial number (starting from 1) denoting its arrival order into the worklist of the operator. This serial number is assigned using an atomic counter at the time of enqueueing them to the worklist(s) of the operator. In some cases, a single input can produce more than one outputs. However, they are considered together as one unit and is associated with a single serial number. The schemes we describe below are concerned only with ordering outputs based on this serial number.
\par
Specifically, the ordering constraint requires that for all $t$, the output $o_t$ produced by a tuple $i_t$ be sent downstream (either to an operator or egress) only after $o_1, o_2, ..., o_{t-1}$ are sent downstream. Since, these outputs are produced by concurrent workers, they are produced in no predetermined order. So, $o_{t+1}$ might be produced before $o_{t}$ and in that case $o_{t+1}$ has to wait until $o_{t}$ is produced and sent. We first describe a lock-based solution that implements this waiting scheme. We show that such a straight-forward design could lead to sub-optimal performance. Then, we present our improved low-latency, non-blocking solution.

\subsection{Lock-Based Solution.}
A standard approach would be to use a waiting buffer and a counter. The counter keeps track of the serial number of next output to be sent. Whenever the corresponding output is available it is sent downstream immediately and the counter incremented. If an output is not the next one to be sent, we simply add it to the waiting buffer and return to process more inputs. So, when an output is sent, we must check the waiting buffer for the next output and if present, send that to the downstream operator and repeat. Further, we do not want multiple workers to send the output(s) downstream or increment the counter concurrently as that will violate our ordering guarantee. So, we protect the overall logic using a global lock to ensure correctness and progress. This scheme is listed in fig. \ref{alg:naive-reordering-scheme}
\begin{figure}[t]
\centering
\begin{minipage}[b]{0.4\textwidth}
\begin{lstlisting}[mathescape, numbers = left]
void send($o_t$) {
 lock();
 if ($t$ == next) {
  send_downstream($o_t$);
  next++;
  while(buffer has $o_{next}$) {
   send_downstream($o_{next}$);
   next++;
 } } else {	
   add $o_t$ to buffer
 }
 unlock();
}
\end{lstlisting}
\caption{Lock-based Scheme: The global lock used here induces unnecessary blocking behavior}
\label{alg:naive-reordering-scheme}
\end{minipage}
\hspace{0.5cm}
\begin{minipage}[b]{0.45\textwidth}
\centering
\includegraphics[scale=0.6]{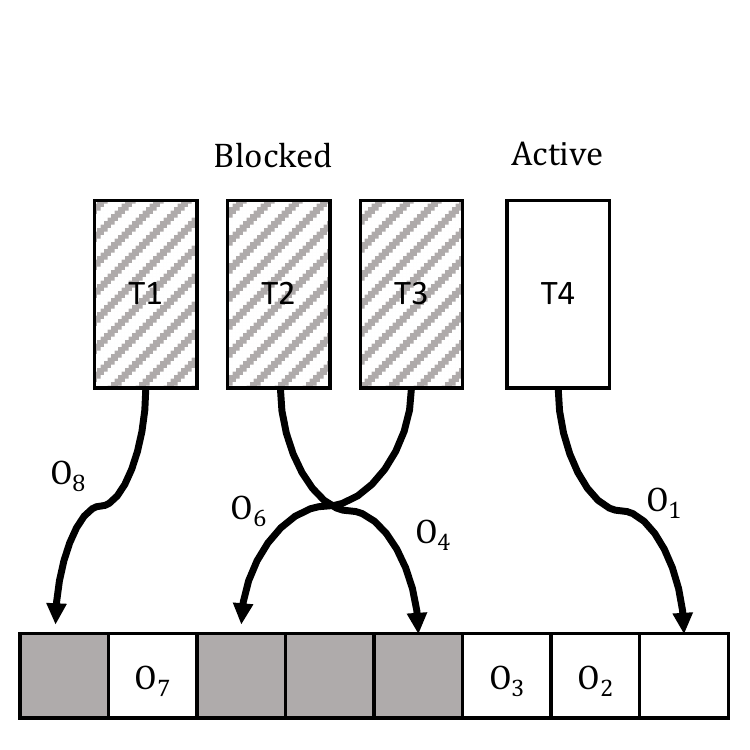}
\caption{Unnecessary Blocking: $T_1, T_2$ and $T_3$ are blocked until $T_4$ sends outputs $O_1, O_2$ and $O_3$ downstream.}
\label{fig:naive-problem}
\end{minipage}
\end{figure}

However, this scheme results in sub-optimal performance due to unnecessary blocking of workers. Consider the scenario shown in fig. \ref{fig:naive-problem}: worker thread $T_4$ produces $O_1$, which is the next output to send downstream, obtains the lock and keeps sending outputs $O_2$ and $O_3$ as they are already available in the waiting buffer. Meanwhile, workers $T_1, T_2, T_3$ that produced outputs $O_4$, $O_6$and $O_8$ respectively get blocked trying to acquire the lock. However, we know that outputs have pre-allotted serial numbers. So, adding them to the waiting buffer can be totally independent of sending them downstream. Ideally, workers $T_1, T_2$ must be able to add their outputs to the waiting buffer while another worker is sending outputs downstream and return back to do useful work. 

\subsection{Non-Blocking Solution}
We improve this version by replacing the \texttt{lock} with an atomic flag, essentially to provide \texttt{try\us lock} semantics. This scheme is listed elaborately in fig. \ref{alg:non-blocking-reordering-scheme}. Any worker $w$ seeking to send an output downstream, first tries to add it in a bounded circular buffer. The buffer is used to store available outputs that are not yet ready to be sent. This step can either fail or succeed based on the size of buffer and current value of the \texttt{next} counter. If it fails, the worker tries again with the same output, after it exits the \texttt{send} function. 
\par
Before exiting, irrespective of success or failure in the add step, $w$ tries to send pending outputs in the buffer to downstream operator(s). It can do so only when it can \texttt{test\us and\us set} a global atomic flag. If it cannot set the flag, it means that another worker $w'$ is performing this step.  In that case, $w$ simply exits the function instead of getting blocked, unlike in the lock-based scheme.
\par
If $w$ can set the flag, it has exclusive access to send the buffered outputs. First, it obtains the current value of \texttt{next} counter and the corresponding value from the \texttt{buffer} array. If this value is not \texttt{EMPTY}, it sends the output downstream, increments the counter and repeats this again for the new value of \texttt{next}. 
If the obtained value is \texttt{EMPTY} then, $w$ clears the flag and exits the loop. Further, to ensure that every output is sent downstream as soon as it is ready to be, $w$ checks the buffer array again and retries to send the previously unavailable output, if it is available now. This ensures that there is no ready-to-send output in the buffer, when there are no active workers inside \texttt{send}.   
\par
\begin{figure}
	\centering
	\begin{lstlisting}[mathescape, numbers = left]
	//data fields
	atomic_long next;
	atomic<output*> buffer[$s$];
	atomic_flag flag;

	//invoked by workers
	bool send($o_t$) {
		bool success = try_add($o_t$);
		send_pending_outputs();
		return success;
	}
	
	//helper functions
	bool try_add($o_t$) {
		$n$ = next.load();
		if($t \geq n$ and $t < n + s$) {
			$i = t \mod s$;
			buffer[$i$].set($o_t$);
			return true;
		} else {
			return false;
	} }

	void send_pending_outputs() {
		if (not flag.test_and_set()) {
			//send as many outputs as possible
			while(true) {
				$n =$ next.load();
				$i = n \mod s$;
				$o$ = buffer[$i$].load();
				if ($o$ is not EMPTY) {
					send_downstream($o$);
					buffer[$i$].set(EMPTY);
					next.fetch_add(1);
				} else {
					flag.clear();
					break;
			}	}
			//re-check if next output is available
			$o$ = buffer[$i$].load();
			if ($o$ is not EMPTY) {
				send_pending_outputs();
	}	}	}
	\end{lstlisting}
	\caption{Non-blocking Reordering Scheme}
	\label{alg:non-blocking-reordering-scheme}
\end{figure}

\begin{theorem}[Correctness of Non-Blocking Reordering Scheme]
	If all concurrent workers allotted to an operator send outputs to operators downstream by invoking the \texttt{send} procedure (fig. \ref{alg:non-blocking-reordering-scheme}), then output $o_t$ (with serial number $t$) is sent downstream (by invocation of \texttt{send\us downstream}) only after all outputs $o_1, o_2, ..., o_{t-1}$ are sent. 
\end{theorem}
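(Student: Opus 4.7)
The plan is to pin down two standard properties of the algorithm first — mutual exclusion on the critical region and monotonicity of \texttt{next} — and then establish one non-trivial invariant, the \emph{slot invariant}, from which the theorem follows almost immediately. The easy step is that the atomic flag, used through \texttt{test\_and\_set}/\texttt{clear}, ensures that at most one worker at a time executes the body of \texttt{send\_pending\_outputs}; since the only updates to \texttt{next} (the \texttt{fetch\_add(1)} call) and the only invocations of \texttt{send\_downstream} happen inside this body, they are serialized through a unique ``sender'' at any given moment, and \texttt{next} is non-decreasing throughout the execution.

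The main technical step is the slot invariant: whenever the sender loads \texttt{next} and observes value $n$, the cell $\texttt{buffer}[n \bmod s]$ holds either \texttt{EMPTY} or exactly $o_n$. I would prove this by analyzing which outputs can ever land in that cell. Writes come only from successful invocations of \texttt{try\_add}$(o_t)$ with $t \equiv n \pmod{s}$, and such a success requires some earlier loaded value $\text{next}'$ satisfying $\text{next}' \leq t < \text{next}' + s$. A case split on $t$ kills every candidate except $o_n$: if $t = n + ks$ with $k \geq 1$, combining $t < \text{next}' + s$ with the monotonicity bound $\text{next}' \leq n$ gives a contradiction; if $t = n - ks$ with $k \geq 1$, then since \texttt{next} currently equals $n$, the sender must at some earlier time $T'$ have incremented \texttt{next} from $n-ks$ to $n-ks+1$, at which point (by the same invariant applied at $T'$) it loaded $o_{n-ks}$ from this cell, invoked $\texttt{send\_downstream}(o_{n-ks})$, and reset the cell to \texttt{EMPTY}. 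Because each $o_t$ is written to the buffer at most once — every tuple carries a unique serial number and is produced by a single worker — the cell cannot contain $o_{n-ks}$ again after $T'$, and the only remaining possibility is $o_n$.

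With the slot invariant in hand the theorem is essentially a one-liner: every call $\texttt{send\_downstream}(o)$ is performed by the sender right after loading $n = \texttt{next}$ and finding a non-empty $o = \texttt{buffer}[n \bmod s]$, so by the invariant $o = o_n$, and the sender then advances \texttt{next} to $n+1$. Successive calls therefore dispatch $o_1, o_2, o_3, \ldots$ in that exact order, so no $o_t$ can be sent before each of $o_1, \ldots, o_{t-1}$ has been. The main obstacle — and the place I would write most carefully — is that the slot invariant and the consumption claim ``$o_j$ was dispatched before \texttt{next} moved past $j$'' must be proved \emph{jointly}, by induction on the number of increments of \texttt{next}, to avoid the circularity of appealing to the invariant at earlier values of $n$ without having first established that those earlier increments dispatched the correct outputs. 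A separate subtlety in the code, namely the race between \texttt{try\_add} and the sender's EMPTY-branch exit that the post-loop re-check is designed to repair, concerns progress rather than ordering and is therefore irrelevant to the present theorem, so I would omit it from this proof.
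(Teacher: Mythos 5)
Your proposal is correct and follows essentially the same route as the paper's proof: mutual exclusion via the atomic flag, the entry condition at L16 restricting which serial numbers can occupy a slot, uniqueness of serial numbers to rule out re-insertion of already-consumed outputs, and monotonicity of \texttt{next}. Your ``slot invariant'' is just a repackaging of the paper's conditions (1) and (2), though your explicit remark that the invariant must be established jointly with the dispatch claim by induction on the increments of \texttt{next} is a point the paper leaves implicit.
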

\begin{proof}
	The outputs are sent downstream only inside the send-pending-outputs procedure, in which lines L27-36 (referred to as \emph{exit section}) are protected from concurrent access by the atomic flag variable \texttt{flag}. Since this makes the exit section a critical section, at most one worker increments the \texttt{next} counter and sends the pending outputs in the buffer to the operator downstream. It is quite clear from the control flow in the exit section, that whatever non-\texttt{EMPTY} output is present in \texttt{buffer[$i$]}, it is sent downstream as the output with serial number $n$, where $i = n \mod s$.  Now, it suffices to prove that if the value of \texttt{next} is $n$ and $o$ is the value obtained by loading \texttt{buffer[$i$]} as in fig. \ref{alg:non-blocking-reordering-scheme}, then the following two conditions hold: 
	\begin{enumerate}[leftmargin=\parindent]
		\item If $o_n$ has not been added to the buffer, then $o$ is EMPTY
		\item If $o$ is not EMPTY, then the value of $o$ is $o_n$
	\end{enumerate}
	In order to prove this, we first define $T_k$ to be the time at which the \texttt{next} counter is atomically incremented from $k$ to $k+1$. For simplicity of explanation, we assume $T_k$, for $k < 0$, to be some global initialization time when \texttt{buffer} array is initialized with \texttt{EMPTY}. 
	\par
	The condition at L16 (referred as \emph{entry condition}) determines whether an output $o_t$ (with serial number $t$) can be added to the \texttt{buffer} at $i = (t \mod s)$ or not. This condition enforces that $o_t$ can be added only when \texttt{next} $\in (t - s, t]$, which in turn can happen only during the time interval $(T_{t-s}, T_{t})$.  
	\par
	 Since all updates to the global data fields are atomic, they are sequentially consistent. So, the value of \texttt{buffer[$i$]} (where $i = t \mod s$) is set to \texttt{EMPTY} in the exit section before $T_{t-s}$. We also know that this will definitely remain \texttt{EMPTY} until $T_{t-s}$. This is because the only valid output that can be added at $i$ during that intermittent time is $o_{t-s}$ due to the entry condition. But, we know $o_{t-s}$ has already been added once and by uniqueness of serial numbers, we can assert it will not be added again. From just after $T_{t-s}$, this value will still remain \texttt{EMPTY}  until some worker adds an output into that slot. Again, the entry condition now ensures that only $o_t$ can be added to \texttt{buffer[$i$]} during $(T_{t-s}, T_{t})$. Hence, (1) holds. 
	 \par
	  Further, control flow in the exit section necessitates that $o_t$, if available, is read into $o$ before $T_{t}$. This together with the entry condition, ensures that $o_t$ is not overwritten before being read back from \texttt{buffer[$i$]}. Hence, (2) holds. Both conditions (1) and (2), in addition with the guarantee that \texttt{next} cannot have a value $k+1$ before $k$, we can assert that the outputs are indeed sent downstream in the serial order.   
\end{proof}

\paragraph{Progress} In the above scheme, none of the concurrent workers get blocked due to another worker sending outputs. However, a worker can get blocked due to limited size of the waiting buffer: when it tries to send an output that corresponds to input with a serial number much higher than the current value of \texttt{next}, it can potentially get blocked trying and failing repeatedly to add the output. This is because the entry condition prevents this output to be added until some earlier outputs are sent and the buffer makes space for this output. Meanwhile, this worker repeatedly tries to send it and fails. 
\par 
One simple way to handle this would be to use a non-blocking concurrent map instead of a bounded array. However, the overheads in a simple array are much lesser compared to the alternatives and hence we chose such a design. Even though, we can never eliminate this scenario with a bounded buffer, we can try to avoid its occurrence as much as possible.  One could use an appropriately sized waiting buffer. Further, we could employ design strategies such that concurrent workers working on a data parallel operator would produce outputs almost in-order of their serial numbers. This is a key design strategy in exploiting partitioned parallelism in the ordered setting, which we present in the next section.
\section{Partitioned Parallelism}
\label{sec:partitioned-parallelism}
The essence of partitioned parallelism is that every input to be processed has a \emph{key}, and the state required to process inputs with different keys are disjoint. 
This allows us to process tuples with different keys in parallel, though those with the same key must be processed sequentially, in order. 
\par 
The key space can be statically partitioned into many disjoint buckets based on a strategy such as range or hash partitioning. The system treats tuples belonging to the same bucket as potentially having the same key and processes them sequentially. If the number of buckets is $p$, it limits the degree of parallelism to $p$. Ideally, we would like to have as many buckets as the number of keys to exploit as much parallelism as possible even during load imbalance induced by the partitioning strategy. But, scheduling overheads and complexities in the key space force us to have a fewer, fixed number of buckets. However, a more fine-grained partitioning strategy is still preferable, given the overheads are admissible. Profiling data gathered from sample runs can be used to determine both $p$ and the partitioning of key space into $p$ buckets. 
\par
Further, we would like to design a flexible scheme where workers can be dynamically allotted to operators. This is necessary to support a dynamic scheduling based runtime that allots workers to operators based on current status of the pipeline. As we saw in Sec. \ref{sec:reordering-scheme}, we would also like the processing order of inputs belonging to different buckets to be as close to arrival order as possible. This is because reordering of outputs will lead to unnecessary blocking if processed too much out-of-order. In the rest of this section, we describe the concurrent data-structure and strategy we employ to achieve ordered partition parallelism. We first describe two simpler strategies for implementing such an operator before presenting our approach.

\subsection{Shared-Queue Approach.} In the first approach, which we refer to as the \emph{shared-queue} approach, the producers (preceding operators in the dataflow graph) enqueue their outputs to a single queue (the worklist), and all concurrent workers extract tuples from the same queue and process them. This is a fairly straightforward strategy when the operator is stateless. We can use any linearizable concurrent queue to support multiple producers and consumers. 
\par
\begin{figure}
	\begin{minipage}[b]{0.46\textwidth}
		\centering
		\includegraphics[scale=0.15]{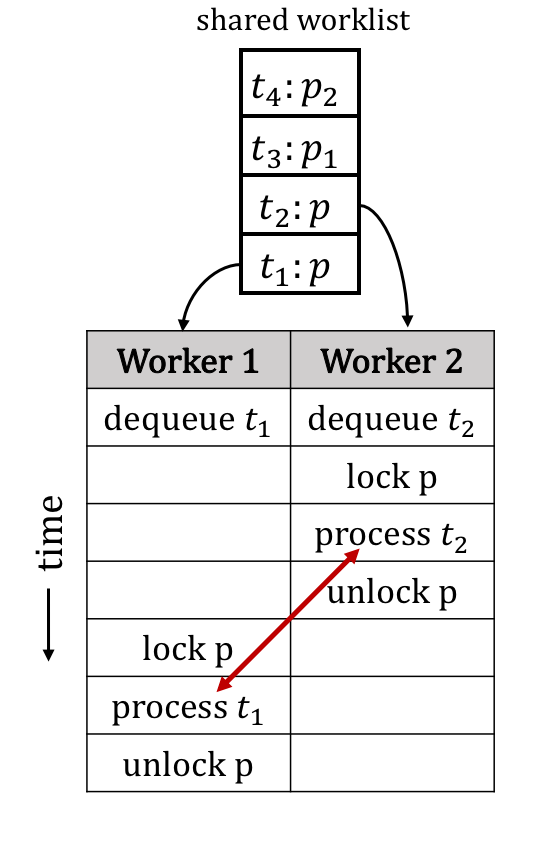}
		\caption{Shared-Queue Approach: Each worker must dequeue the tuple and obtain a lock on the tuple's bucket atomically, otherwise a concurrent execution might violate the processing order constraint as shown above.}
		\label{fig:shared-queue-diagram}
	\end{minipage}
	\hspace{0.5cm}
	\begin{minipage}[b]{0.46\textwidth}
		\centering
		\includegraphics[scale=0.15]{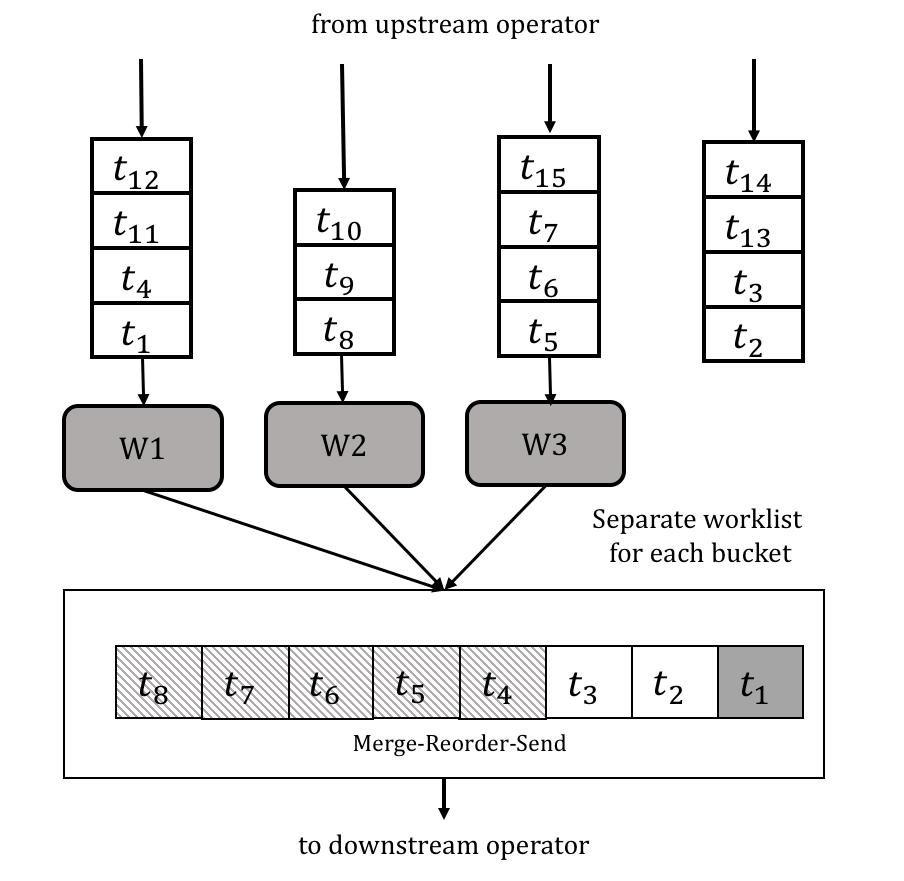}
		\caption{Partitioned-Queue Approach: Since no worker is allotted to the last bucket which contains tuple $t_2, t_3$, the outputs of tuples $t_4$ to $t_{12}$ will get blocked from flowing to the downstream operator, limiting pipelined parallelism.}
		\label{fig:partitioned-queue-diagram}
	\end{minipage}
\end{figure}
A partitioned operator, however, introduces a key challenge: we need to ensure that the items with same key are processed sequentially and in order. A naive approach would be as follows: Each worker first dequeues an item $t$ and then acquires a lock (or use any equivalent mechanism to ensure isolation) on the item's key so that two items with the same key are not processed concurrently. However, these two actions must be performed \emph{atomically}: otherwise, two workers could concurrently dequeue items $t_1$ and $t_2$ with the same key $k$, but end up acquiring the lock on $k$ out-of-order and thus process them out-of-order as shown in fig. \ref{fig:shared-queue-diagram}. This necessitates quite complex and expensive concurrency control. Furthermore, this also introduces potentially blocking behavior when one worker waits for another, which is processing an input tuple with the same key. A naive implementation could aggravate this, causing all workers to be blocked, if a global lock is used to ensure the atomicity of the sequence of these two actions.

\subsection{Partitioned-Queue Approach.} 
The second approach, which we refer to as the \emph{partitioned-queue} approach avoids this problem. We use separate queues (worklist) for each bucket and the producers enqueue each tuple into the queue corresponding to the tuple's key. Different workers process different queues and hence there is no need for explicit concurrency control. However, this approach has its own set of drawbacks: Consider the scenario shown in figure \ref{fig:partitioned-queue-diagram}, where the number of workers assigned to an operator is less than $p$ (number of buckets). In this case, the workers may make progress processing a subset of the $p$ queues. However, the outputs produced by these workers will be blocked by the reordering scheme that merges the outputs produced from the $p$ queues in order. This can cause further sub-optimal performance downstream as this behavior limits available pipelined parallelism between this and the downstream operator.

\subsection{Hybrid-Queue Approach} 
We propose a hybrid approach that combines techniques from both these strategies. We use separate queues, one for each bucket as described above. In addition, we utilize a master queue which is analogous to the single queue of the former approach. Actual tuples are stored in individual bucket queues while, the master queue stores the \emph{key} of each tuple. We list the execution model in fig. \ref{alg:hybrid-queue}.
\par
Every worker $w$ dequeues a key $k$ from the master queue, and then tries to gain exclusive access to the queue $Q_k$ that corresponds to $k$. 
If some other worker $w'$ already has exclusive access to queue $Q_k$, then worker $w$ delegates the responsibility of processing the corresponding tuple to $w'$, by incrementing a concurrent counter $count_k$  associated with the key $k$. 
The counter $count_k$ denotes the number of tuples from $Q_k$ to be processed before the active worker of key $k$ ($w'$ in this case) tries to dequeue the next key from master queue. The same counter is used to provide exclusive access to the queue $Q_k$. Having delegated the responsibility of processing the dequeued tuple to $w'$, worker $w$ can return to process the next key from the master queue.
\par
If, on the other hand, worker $w$ gains exclusive access to queue $Q_k$, it dequeues the next tuple from $Q_k$ and processes it. 
However, after processing it, the worker needs to check if there are any delegated tuples that it needs to process from the same queue $Q_k$. 
As long as the concurrent counter $count_k$ indicates there are delegated items, the worker continues to dequeue tuples from $Q_k$ and processes them.
When the counter becomes zero, the worker returns to processing the master queue. 
We prove the correctness of this scheme in the theorem below.

\begin{figure}
\centering
\begin{lstlisting}[mathescape, numbers = left]
//invoked by producers
void addInput(tuple) {
	$p$ = getPartition(tuple);
	partitionQueues[$p$].enqueue(msg);
	masterQueue.enqueue($p$);
}
//invoked by workers
void consumeInputs() {
	while(masterQueue.tryDequeue($p$)) {
		if(count[$p$].fetch_add(1) == 0) {
			do {
				partitionQueues[$p$].tryDequeue(tuple);
				operate(tuple);
			} while(count[$p$].fetch_sub(1) > 1) ;
} } }
\end{lstlisting}
\caption{Hybrid Queue Approach: The \texttt{addInput} procedure is invoked by upstream operators and \texttt{consumeInputs} procedure is invoked by workers allotted to the partitioned stateful operator}
\label{alg:hybrid-queue}
\end{figure}

\begin{theorem}[Correctness of hybrid-queue algorithm]
	If inputs to a partitioned stateful operator $\textbf{o}$ are added using the \texttt{addInput} (fig. \ref{alg:hybrid-queue}) procedure and workers allotted to $\textbf{o}$, consume inputs by invoking the \texttt{consumeInputs} procedure (fig. \ref{alg:hybrid-queue}), then the following properties hold:
	\begin{enumerate}
		\item No two workers can \texttt{operate} on tuples having the same key $k$ concurrently
		\item All tuples that have the same key $k$ are processed exactly once and in the order of their arrival
	\end{enumerate}
\end{theorem}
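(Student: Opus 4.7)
The proof will hinge on treating the atomic counter \texttt{count[$p$]} as simultaneously (i) a mutual-exclusion token for the per-partition queue $Q_p$, and (ii) a pending-work counter that records delegated tuples. My plan is to establish a single state invariant that couples these two roles, then derive both properties as corollaries. Concretely, let $A_p(t) \in \{0,1\}$ indicate whether some worker is currently executing the body of the \texttt{do-while} loop in \texttt{consumeInputs} for partition $p$ at time $t$, and let $D_p(t)$ denote the number of master-queue dequeues of $p$ that have occurred minus the number of completed calls to \texttt{operate} on tuples of partition $p$. The target invariant is \texttt{count[$p$]} $= D_p(t)$ together with $A_p(t) = 1$ iff $D_p(t) \geq 1$.

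Property (1) then follows immediately. I would argue that \emph{active-worker status} is acquired exactly when \texttt{fetch\us add(1)} returns $0$, which happens at most once per active phase by atomicity: any subsequent worker that dequeues $p$ while another worker is still looping observes \texttt{fetch\us add} $\geq 1$ and exits without calling \texttt{operate}. Symmetrically, the active worker relinquishes status only when \texttt{fetch\us sub(1)} returns $1$, i.e.\ when \texttt{count[$p$]} transitions back to $0$. Combining these two observations with the invariant, at any moment at most one worker can be inside the critical section for a given $p$, so no two workers call \texttt{operate} on tuples of the same key concurrently.

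For property (2), I would use a token-accounting argument. Each call to \texttt{addInput} enqueues one tuple to $Q_p$ and one key $p$ to the master queue; each successful master-queue dequeue later triggers exactly one \texttt{fetch\us add}, and each iteration of the inner loop performs exactly one \texttt{tryDequeue} from $Q_p$ followed by one \texttt{operate}, matched by a \texttt{fetch\us sub}. Since the invariant guarantees that \texttt{count[$p$]} returns to $0$ iff all pending work has been executed, the number of operate-calls equals the number of enqueues, giving exactly-once processing. Order preservation within a key then reduces to the FIFO semantics of $Q_p$: by property (1) only the active worker dequeues from $Q_p$, so dequeues are totally ordered, and \texttt{addInput} enqueues to $Q_p$ before signaling via the master queue, so a master-queue dequeue of $p$ is always backed by an already-enqueued tuple of that partition.

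The main obstacle will be nailing down the delicate interleaving at the handoff point, in two places. First, I must rule out a ``lost tuple'' scenario where the active worker exits (\texttt{fetch\us sub} returns $1$) while a concurrent producer's enqueue to $Q_p$ has completed but its matching master-queue enqueue has not yet taken effect; the invariant must be maintained across the non-atomic pair of enqueues in \texttt{addInput}, which I would handle by defining $D_p$ in terms of master-queue dequeues (not enqueues) so that the bookkeeping is always consistent with \texttt{count[$p$]}. Second, I must justify that \texttt{tryDequeue} on \texttt{partitionQueues[$p$]} inside the loop always succeeds: this relies on the happens-before established by enqueuing to $Q_p$ strictly before enqueuing to the master queue in \texttt{addInput}, combined with the linearizability of the underlying concurrent queues. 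Verifying these two points is where the argument has to be the most careful; once they are in hand, properties (1) and (2) follow cleanly from the invariant.
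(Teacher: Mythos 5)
Your proposal is correct and follows essentially the same route as the paper's own proof: mutual exclusion on $Q_p$ is obtained when \texttt{fetch\us add} returns $0$ and relinquished only when \texttt{fetch\us sub} returns $1$, exactly-once and in-order processing follow from balancing increments against decrements plus the FIFO guarantee of \texttt{partitionQueues[$p$]}, and the success of \texttt{tryDequeue} is justified by the enqueue order inside \texttt{addInput}. Your explicit invariant coupling \texttt{count[$p$]} to the dequeue/operate balance is a more formal packaging of the paper's informal argument, not a genuinely different one.
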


\begin{proof}
	Any worker allotted to the operator first dequeues a partition $p$ from the master queue. The condition at L10 in fig. \ref{alg:hybrid-queue} ensures that a worker can obtain a tuple from the partition queue for $p$ only when value of \texttt{count[$p$]} (counter for $p$) is zero before the atomic increment. Now, to prove (1), it is enough to assert that the value of \texttt{count[$p$]} is never zero when a tuple belonging to $p$ is being actively processed by a worker.  In the \texttt{do-while} loop (L11-14), the counter is decremented only after the dequeued tuple is processed completely. Note that the control flow in the \texttt{addInput} procedure ensures that \texttt{tryDequeue} at L12 always succeeds. Since the counter is decremented only at L14, it is clear that only the active worker of $p$ can reduce the value to zero, after which any other worker can enter L11-14. The atomic decrement and the condition at L14 ensures that the current active worker does not process any more tuples when \texttt{count[$p$]} becomes zero. Hence, at most one worker operates on tuples belonging to the same key. 
	\par
	Further, the FIFO guarantee of the linearizable concurrent queues in \texttt{partitionQueues} and the constraint that at most one worker can enter L11-14 for a particular $p$ (proved above) ensure that tuples belonging to the same key are processed exactly once and in order of their arrival into \texttt{partitionQueues[$p$]}. 
\end{proof}

\paragraph{Progress} No worker can get blocked in the hybrid-queue approach. Adding inputs  happen only by a single worker (of the operator upstream) due to the execution model employed in the reordering scheme. When consuming inputs, a worker that dequeues a tuple with same key as one being concurrently processed by another worker will simply delegate it to the active worker. So, this worker does not get blocked and moves on to the next key in the master queue. In this approach, outputs are also produced almost in their arrival order, which avoids blocking of outputs (sometimes workers themselves) by the reordering scheme.
\section{Correctness of Implementation}
\label{sec:correctness} 
In this section, we describe how we use the concurrent data structures described in Sec. \ref{sec:reordering-scheme} and \ref{sec:partitioned-parallelism} to implement the operators and prove that our implementation in combination with the runtime always guarantees ordered processing.
\par
We start by defining the notion of correctness on the concurrent execution resulting from any implementation of the streaming computation. 
\begin{definition}[Ordered execution]
	A concurrent execution $E$ of a streaming computation on any input sequence $i_1, i_2, i_3, ... $ is \emph{ordered}, if and only if, the output sequence $o_1, o_2, o_3, ...$ produced  by the execution is the same output sequence produced by a sequential execution of the pipeline on $i_1, i_2, i_3, ...$. 
\end{definition}
We would like to ensure our implementation of the streaming computation is correct with respect to the above definition of ordering. An implementation of a streaming computation is said to be \emph{ordered}, if and only if, any concurrent execution of the implementation is ordered.
\par    
There are three types of operators supported in our system: stateful, stateless and partitioned stateful operators. The implementation of a stateful operator is straight-forward. A worker of the upstream operator adds an input tuple to its worklist(a single-producer single-consumer concurrent queue). Only a single worker is allotted to a stateful operator at any time and this worker consumes these inputs serially and adds the corresponding outputs to the worklist of the downstream operator. A stateless operator is built using a shared-worklist (a multi-producer multi-consumer concurrent queue) and our non-blocking reordering buffer (Sec. \ref{sec:reordering-scheme}). Input tuples are added to the shared-worklist and every tuple is allotted  a unique serial number using an atomic counter. Worker(s) allotted to this stateless operator dequeue an input from this worklist and process it to produce output, which are then sent to the downstream operator by invoking the \texttt{send} method of the reordering buffer(fig. \ref{alg:non-blocking-reordering-scheme}). If it fails, the worker tries again until it successfully adds the output to this buffer.
\par
We implement the partitioned stateful operator by composing the hybrid partitioning scheme we described in Sec. \ref{sec:partitioned-parallelism} with our non-blocking reordering buffer. Inputs are allotted a unique increasing serial number in the order of their arrival and added by invoking the \texttt{addInput} method (fig. \ref{alg:hybrid-queue}). Workers alloted to this operator consumes inputs using the \texttt{consumeInputs} method and invokes the \texttt{send} method of our reordering buffer (fig. \ref{alg:non-blocking-reordering-scheme}) to send outputs downstream. 

\begin{theorem}[Correctness of pipeline implementation]
	Any pipeline built by composing the above operator implementations and executed using our dynamic runtime only allows ordered executions.
\end{theorem}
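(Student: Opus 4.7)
The plan is to argue by induction on the position of an operator in the linear chain pipeline. Let $\textbf{o}_1, \textbf{o}_2, \ldots, \textbf{o}_m$ denote the operators in order, and for each $k$, let $S_k$ denote the sequence of tuples that a sequential (single-threaded) execution of the pipeline on the input stream $i_1, i_2, \ldots$ would deliver on the edge between $\textbf{o}_k$ and $\textbf{o}_{k+1}$. I want to show that in any concurrent execution of the implementation, the tuples actually flowing on that edge are delivered to $\textbf{o}_{k+1}$'s worklist in exactly the sequence $S_k$. The base case $k = 0$ is the ingress, where the runtime assigns serial numbers from an atomic counter in the arrival order, so the claim is immediate. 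The egress ($k = m$) version of this statement is precisely the definition of ordered execution.

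For the inductive step, assume that tuples arrive at the worklist of $\textbf{o}_k$ in the order given by $S_{k-1}$, each tagged with a monotonically increasing serial number reflecting arrival order. I would split into three cases by operator type. For a stateful operator, only one worker is ever allotted, the worklist is an SPSC FIFO queue, and thus the worker consumes inputs in arrival order, producing the sequential output; sending to the downstream SPSC queue trivially preserves order. For a stateless operator, all workers dequeue from a shared linearizable MPMC queue and invoke the \texttt{send} procedure of the reordering buffer. Here I can directly invoke the correctness theorem of the non-blocking reordering scheme to conclude that output bundle $o_t$ is delivered downstream only after $o_1, \ldots, o_{t-1}$; since the operator is stateless, each $o_t$ depends only on $i_t$ and equals the bundle a sequential execution would produce for $i_t$, so the delivered sequence matches $S_k$.

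The partitioned stateful case is the one that requires the most care, and I expect it to be the main obstacle. Inputs are added via \texttt{addInput} and consumed via \texttt{consumeInputs} of the hybrid-queue scheme, with outputs forwarded through the reordering buffer. From the correctness theorem of the hybrid-queue algorithm, tuples with the same key are operated on exactly once, in arrival order, and no two workers operate on tuples with the same key concurrently. This means that for any serialization of the \texttt{operate} calls consistent with the per-key order, the resulting state transitions and the produced output bundles $o_t$ are identical to those of a sequential execution: disjoint key partitions never race, and same-key partitions are serialized in arrival order. Thus the set of output bundles produced, indexed by serial number, coincides with the sequential bundles. Applying the reordering scheme's correctness theorem once more, these bundles are forwarded to $\textbf{o}_{k+1}$'s worklist in serial order, giving $S_k$.

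To close the induction I also need to observe two small invariants that are used implicitly above. First, the serial numbers handed out at operator $\textbf{o}_k$ are consistent with arrival order at $\textbf{o}_k$'s worklist: this follows from the atomic counter used at enqueue time together with the inductive hypothesis that enqueues at $\textbf{o}_k$ happen in the order $S_{k-1}$. Second, each bundle is sent downstream as a single unit associated with one serial number, so the reordering guarantee lifts from individual tuples to sequences of emitted tuples per input, matching the definition of ordered execution. The hard part is really case three: it requires simultaneously reasoning about the concurrency control of the hybrid-queue delegation mechanism and the reordering buffer, and then arguing that together they produce an execution equivalent to the sequential reference. Once that composition is handled cleanly, the induction carries through to $k = m$, yielding the claim.
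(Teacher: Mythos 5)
Your proposal is correct and follows essentially the same route as the paper's proof: the same case analysis by operator type (stateful via a single worker on an SPSC queue, stateless via the reordering-buffer theorem, partitioned stateful via the hybrid-queue theorem composed with the reordering buffer). The only difference is that you make the composition across operators explicit as an induction on pipeline position with the invariant that each edge carries the sequential sequence $S_k$, whereas the paper proves each single-operator pipeline correct and then asserts that linear composition is ``straightforward to see''---your version is the more rigorous rendering of the same argument.
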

\begin{proof}
It is easy to see that the above theorem holds for a pipeline composed only of a single stateful operator. For a pipeline composed only of a stateless operator, even though $i_1, i_2, i_3 ...$ may be processed in any order, the corresponding output produced for $i_t$ is $o_t$ since the operator is stateless. The reordering buffer (sec. \ref{sec:reordering-scheme}) guarantees that reordered sequence sent out is $o_1, o_2, o_3 ...$. 
\par
Now, let us consider a pipeline composed only of a partitioned stateful operator. For any two inputs $i_k$ and $i_l$ $(k < l)$, if they belong to different keys, then irrespective of the order in which they are processed the corresponding outputs produced will be $o_k$ and $o_l$. When they have the same keys, the hybrid scheme guarantees that $i_k$ will be processed before $i_l$ and hence the outputs produced will be $o_k$ and $o_l$. So, the output produced for $i_t$ is $o_t$. Similar to a stateless operator pipeline, reordering scheme ensures that the output sequence produced is $o_1, o_2, o_3,...$.
\par
Since each of these single operator pipelines lead to correct executions, it is straightforward to see any linear composition of these operators will always lead to correct executions in our runtime.
\end{proof}

\section{Dynamic Scheduling}
\label{sec:dynamic-scheduling}
Our system consists of many workers that consume inputs from the worklist(s) of an operator to produce outputs using the user-specified operator logic. The number of such workers is the same as number of cores available on the multicore machine. Each worker queries a central scheduler data structure to obtain some work and returns back for more, after it finishes the work allotted previously. The scheduler is responsible for answering two questions: (1) which operator to work on and (2) how many tuples to process from its worklist(s) before returning back. In this section, we propose some scheduling heuristics to perform this dynamic work allotment to worker threads.
\par
We say an operator is \emph{schedulable}, if the currently allotted number of workers is less than its maximum allowed degree of parallelism and its worklist is not empty. The theoretical maximum degree of parallelism of a stateful operator is 1, of a partitioned stateful operator is the number of partitions $p$, and that of a stateless operator is $\infty$ (essentially the number of available cores $n$). In all the heuristics we discuss below, we consider only those operators that are schedulable at the time we make the scheduling decision.
\par 
A worker when allotted to an operator, operates on it for a constant time slice $s$. The maximum number of tuples that must be processed by the worker can be computed using this constant $s$ and $c_i$, the cost of processing a single input tuple by $o_i$. If the worklist of an allotted operator becomes empty before processing the specified number of tuples, the worker does not get blocked; instead returns back to query the scheduler for more work.There are several alternatives for choosing the time for which an operator should be scheduled. However, we focus on constant time slices in order to study characteristics of the heuristics we propose without interference from these changes. Nevertheless, one has to be careful in choosing $s$. Higher the value of $s$, lower the contention for querying the scheduler and better amortization of scheduling overheads. On the other hand, a larger value of $s$ impedes the responsiveness of the system to dynamic changes, as it can get stuck on a previous scheduling decision for a long time.
\par
We first propose some intuitive heuristics based on the idea of orchestrating the flow of tuples through a pipeline. There are two simple ways to enable this flow: one is to provide a thrust from ingress towards egress or use a suction pressure from egress to pull items from ingress. The following two heuristics are based on this key idea.
\subsection{Queue-size-throttling (QST)} 
In this heuristic, we push tuples from the entry point towards the exit point and try to focus on one operator at a time. We schedule an operator until it generates enough inputs for the downstream operators and then go on to schedule the next one in the pipeline. We implement this scheme using queue throttling: each operator has an upper bound on its output queue (worklist of the downstream operator) size and is not scheduled if current size is higher than this threshold. In short, the heuristic always picks the earliest operator in the pipeline that has current output queue size less than its threshold. 
\par
Further, each operator $o_i$ has a selectivity, denoted by $s_i$, which is the average number of outputs produced by $o_i$ on processing a single input tuple. 
For example, selectivity is 1 for a \texttt{map} operator that maps each input tuple to a single output tuple, while it is less than 1 for a \texttt{filter} and more than 1 for \texttt{flat-map}, which maps a single input tuple to more than one output tuples. Due to difference in selectivities, having a uniform threshold for all operators could potentially create a slack in the pipeline. So, we set the output queue size threshold $T_i$ for an operator $o_i$ as follows, where $cs_i$ is the cumulative selectivity of operator $o_i$ since ingress $(cs_i = \prod_{k=1}^{i} s_k)$ and $C$ is a constant that can be imagined as capacity of the system. 
\begin{align}
	T_i  &= \frac{C * cs_i}{\sum_{i = 1}^{n} cs_i}
\end{align}
Note that $T_i$ is proportional to the expected number of tuples produced by $o_{i}$ as input to $o_{i+1}$, when $\frac{C}{\sum_{i = 1}^{n} cs_i}$ tuples are processed in the overall pipeline.
\subsection{Last-in-pipeline (LP)} 
This heuristic is based on the complementary idea of pulling tuples from the exit point. In contrast to QST, this heuristic seeks to schedule operators later in the pipeline. Whenever an operator is not schedulable, this heuristic moves to its upstream operator and schedules that. This scheme depends entirely on the imminent dataflow between the operators and not on any of the operator characteristics. So, LP chooses the latest operator in the pipeline that has a non-empty input queue. An alternative could be to have a minimum worklist size, in which case only operators with worklist at least as big as this threshold would be considered for scheduling. But, in our empirical evaluation we consider only the simpler case where this threshold is 1.
\par
The next set of heuristics take a slightly different approach to scheduling by \emph{prioritizing} operators based on a certain measure of priority. This priority is computed using operator characteristics and current status of the pipeline. Essentially, these heuristics answer the question: which operator in the pipeline currently needs the most worker time to reach our performance goals? We discuss two heuristics designed using this strategy below. 
\subsection{Estimated-time (ET)} 
In this heuristic, we prioritize operators based on the estimated time it would take to process its current worklist, if we allot a new worker to it. We compute priority $p_i$ of an operator $o_i$, as follows, where $I_i$ denotes the current size of its worklist, $c_i$ denotes the cost of processing a single tuple by $o_i$, $w_i$ denotes the number of workers currently assigned to $o_i$, and $M_i$ is its maximum allowed degree of parallelism:
\begin{equation}
 p_i = 
 \begin{cases}
   \frac{I_i * c_i}{w_i + 1} & \text{if } w_i < M_i \\
   0 & \text{otherwise}
 \end{cases}
\end{equation}
This strategy is based on the intuition that an operator that needs more worker time will lag behind and have a worklist that will take longer to complete.
\subsection{Current-throughput (CT)} 
The key idea here is to choose the operator with the lowest throughput, as it is likely to be the \emph{bottleneck} in the pipeline.  We have to normalize the throughput to account for non-unit selectivities. We divide the time dimension into windows of size $w$ and compute the effective number of tuples processed by an operator in that time window as a measure of its throughput. The effective number of tuples $n^w_i$ that would be processed in the current window under current allocation of workers can be computed  approximately as follows:
\begin{equation}
	n^w_i = \frac{T^w_i + w_i * s}{c_i * cs_i}
\end{equation}
where, $T^w_i$ is the total worker time spent on $o_i$ in the current window $w$, $w_i$ is the number of workers alloted to $o_i$ currently and $s$ is the time slice for which each of the $w_i$ workers are allotted to $o_i$. CT chooses the operator with the lowest $n^w_i$ value.  Another critical issue in the above heuristic is deciding on the window size $w$. It is possible for the scheduler to make sub-optimal decisions if the window size $w$ is too low. Ideally, we would like to use a window size that would have same $n^w_i$ for all the operators at the end of the window. This is similar to the \emph{period} of a static schedule.
\par
We evaluate these heuristics on real-world streaming queries from the TPCx-BB benchmark and discuss the pros and cons of choosing one over another in the next section.

\section{Evaluation}
\label{sec:evaluation}
In this section, we present results of evaluation of the different scheduling heuristics and highlight benefits of our design of the parallelization framework for ordered stream processing empirically. 

\paragraph{Experimental Setup} We perform all our experiments on Intel Xeon E5 family 2698B v3 series which runs the Windows Server 2012 R2 Datacenter operating system. 
It has 16 physical cores, with L1, L2 and L3 cache of size 32 KB, 256 KB and 40 MB. We implemented our research prototype in C++ on Windows using standard library implementations of concurrent queues and other atomic primitives. We measure throughput and latency by sending marker wrappers over tuples at equal tuple intervals, which carry information about entry and exit times. We ran all experiments for 2-10 mins and report the mean over 3 runs. For measurements, we consider only markers in the 20th to 80th percentile range, to eliminate starting up and shutting down interferences. Average throughput is computed by obtaining the ratio of number of tuples to the total time taken to process them and latency by averaging the processing latency of each marker in the range.

\paragraph{Benchmark} We use queries from the TPCx-BB benchmark\cite{TPCxBB}, which is a modern Big Data benchmark that covers various categories of data analytics. We use all queries (Q1, Q2, Q3, Q4 and Q15) that correspond to stream processing workloads from TCxBB to compare our heuristics and evaluate various aspects of the runtime design. These queries and their implementation details are summarized in table \ref{query-summary}. Web clickstreams are generated by every click made by a user on the online shopping portal and every item purchase in a retail store generates a store sales tuple. 

\begin{table*}
	\centering
	\begin{tabular}{ | l | l | p{11.5cm}|}
		\hline
		 &\textbf{Pipeline} &  \textbf{Brief Description} \\\hline
		1 & SS  $\rightarrow$  SL $\rightarrow$ PS $\rightarrow$ PS $\rightarrow$ SF& Find top 100 pairs of items that are sold together frequently in the retail stores every hour\\\hline
		2 & WC  $\rightarrow$  SL $\rightarrow$ PS $\rightarrow$ SL $\rightarrow$ PS $\rightarrow$ SF & Find top 30 products that are viewed together online. Viewed together relates to a click-session of a user with session time-out of 60 mins\\\hline
		3 & WC  $\rightarrow$   SL $\rightarrow$ PS $\rightarrow$ PS& Find top 30 list of items (sorted by number of views) which are the last 5 products (in the past 10 days) that are mostly viewed before an item was purchased online\\\hline
		4 & WC  $\rightarrow$   SL $\rightarrow$ PS $\rightarrow$ SL $\rightarrow$ SF & Shopping cart abandonment analysis: For users who added products in their shopping cart but did not check out, find average number of pages they visited during their session\\\hline
		15 & SS  $\rightarrow$   SL $\rightarrow$ SL $\rightarrow$ PS & Find item categories with flat or declining sales for in-store purchases\\\hline
	\end{tabular}
	\caption{Summary of streaming queries in TPCx-BB. In the above table, WC = web clickstreams, SS = store sales, SL = stateless, PS = partitioned stateful and SF = stateful operator}
	\label{query-summary}
\end{table*}

\subsection{Comparison of scheduling heuristics}
\begin{figure*}
\begin{subfigure}[b]{0.49\textwidth}
	\centering
	\includegraphics[scale=0.5]{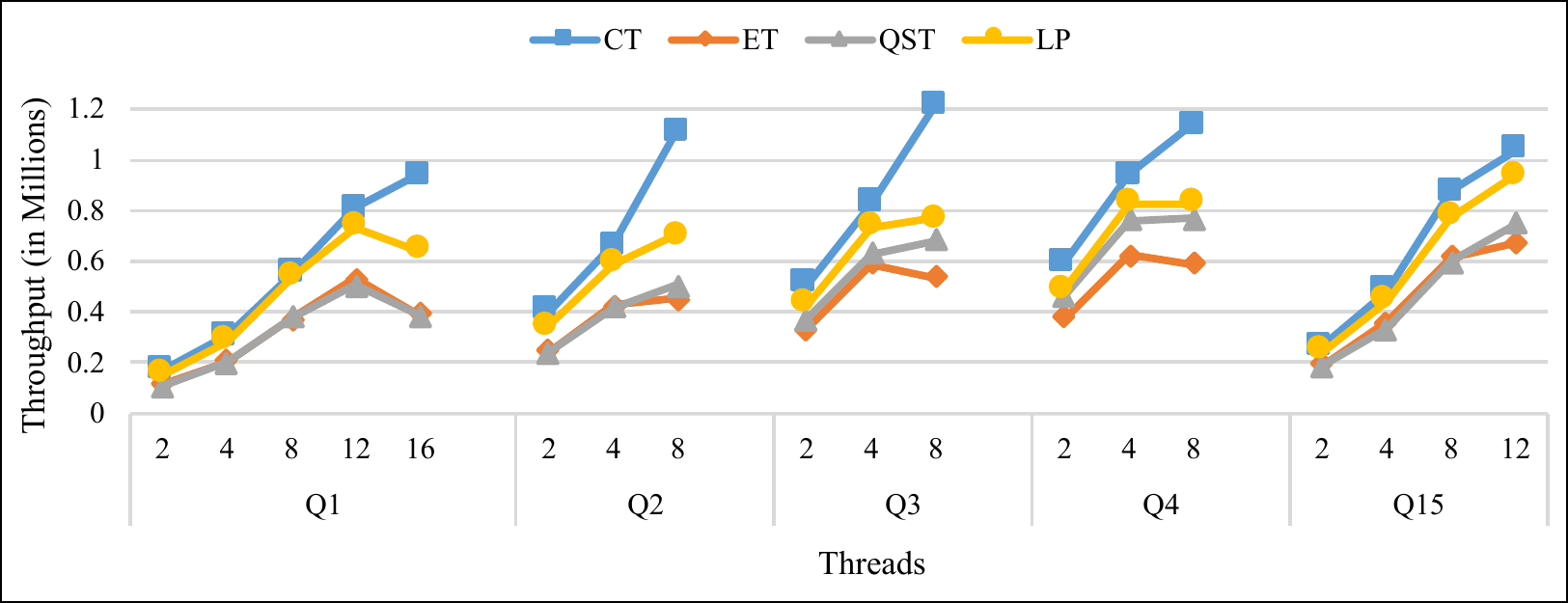}
    \caption{Average Throughput}
	\label{fig:bigbench-schedulers-throughput}
\end{subfigure}
\begin{subfigure}[b]{0.49\textwidth}
	\centering
	\includegraphics[scale=0.5]{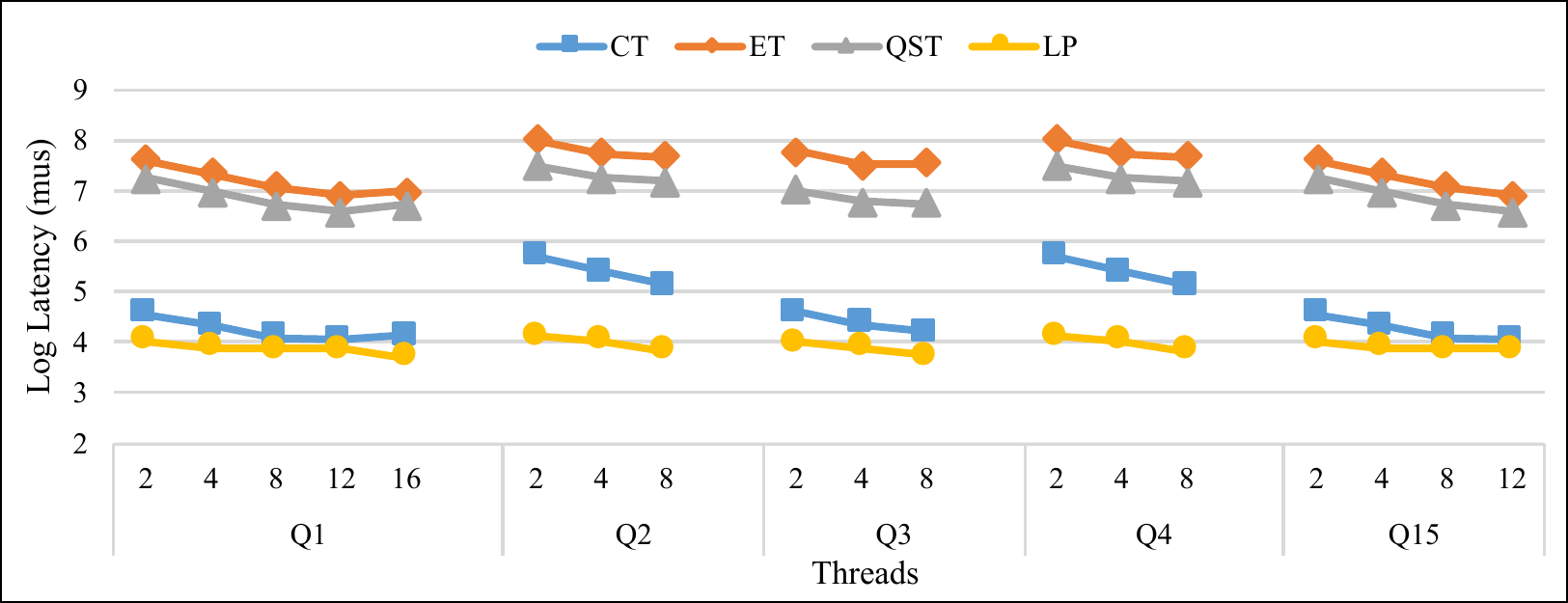}
    \caption{Average latency}
	\label{fig:bigbench-schedulers-latency}
\end{subfigure}
\caption{ Performance of the runtime when using different scheduling heuristics over TPCx-BB queries (We increase number of cores until peak throughput of the best heuristic - performance drops after that due to overheads of parallelization.)}
\end{figure*}

We discussed several heuristics for dynamically scheduling operators in a stream processing pipeline in Sec. \ref{sec:dynamic-scheduling}, namely normalized-current-throughput (CT), estimate-completion-time (ET), last-in-pipeline (LP) and queue-size-throttling (QST).  
We present results of experiments comparing their performance in terms of throughput and latency for the above queries in figures \ref{fig:bigbench-schedulers-throughput} and \ref{fig:bigbench-schedulers-latency} respectively. 
We increase the number of cores until peak throughput of the best heuristic, beyond which the performance drops when the overheads of parallelization outweigh its benefits. 
We can use existing techniques in the literature~\cite{IBMStreams} to identify this break-even point automatically. So, we do not focus on that aspect here.
\par
\textbf{Throughput.} We observe that heuristic CT scales almost linearly up to 16 cores for Q1, 12 cores for Q15 and 8 cores for Q2, Q3 and Q4. 
It achieves a peak throughput of approximately millions of web clickstreams and store sales tuples per second. 
We observe that this is the best possible throughput based on the per-tuple processing costs and selectivities of operators in the pipeline for corresponding degrees of parallelism. 
Among other heuristics, LP performs as well as CT for queries Q1 and Q15, but achieves sub-optimal performance for the others.
Both ET and QST are observed to follow a similar trend in speedup achieved, however, they do not perform as well as CT or LP in terms of absolute throughput.
\par 
\textbf{Latency.} LP is the best heuristic for low-latency processing, followed closely by CT. 
It achieves latencies as low as a few milliseconds, which is the best known for stream processing systems. 
CT, which yields the best throughput, also processes tuples with such low latencies in many cases while it shoots up to 100s of milliseconds in some cases. 
Note that this is still quite low compared to other stream processing systems, which are based on batched stream processing \cite{SparkStreaming, Trill}. 
On the other hand, ET and QST have quite high latencies. 
This increase in latency for QST maybe due to a higher value of $M$ (refer Sec. \ref{sec:dynamic-scheduling}), while ET is heavily influenced by the throughput of input stream to the overall pipeline.
\par
\textbf{Analysis.} From our analysis of the experimental results, we observe that there is a difference in performance among the heuristics even when their worker time distribution (ratio of total worker time spent on each operator in the pipeline) is almost similar. 
Heuristics that distribute workers across operators in the pipeline simultaneously tend to establish a continuous pipelined flow and are seen to yield much better throughput and latency. 
Those that focus on a single operator by exploiting maximum data parallelism at a time lead to increased per-tuple processing cost due to overheads at higher degrees of parallelism. 
\par
CT and LP seem to be exploiting this dichotomy quite efficiently. 
Choosing the operator with lowest estimated normalized throughput in the current window easily establishes this pipelined flow and hence uses an ideal combination of data, partitioned and pipeline parallelism.
LP, that aims to always schedule operators later in the pipeline also establishes this continuous flow as follows: Initially, it is forced to schedule earlier operators in the pipeline as later ones are not schedulable; as they are scheduled it generates inputs for later operators and any worker that exits this operator is scheduled immediately on the next while some others are still processing the earlier operator. However, LP over-allots workers to operators later in the pipeline when they are schedulable which leads to sub-optimal performance in some queries above. 
The QST heuristic focuses on one operator at a time by design, similar to batched stream processing, thereby scheduling operators one-by-one along the pipeline. 
ET seems to be highly influenced by the input stream throughput as priority of the first operator depends on this. 
Hence, for a value of throughput higher than current system throughput, ET focuses mainly on the earliest operator and leads to sub-optimal performance as is evident from the results.
\par
In the next two sub-sections, we discuss certain aspects of our parallelization framework that handles concurrent workers allotted to the same data or partitioned parallel operator. 
We designed \emph{parametric operators} that can be used to create stateless and partitioned stateful operators with different computation profiles to help analyze their scalability in our framework. 
These operators are based on matrix computations on the input tuple. 
The per-tuple processing cost, input tuple size, state size (for partitioned stateful) and selectivity can be varied by initializing these parametric operators with appropriate parameters.  

\subsection{Comparison of Partitioning Schemes}
Now, we compare the two partitioning schemes we described in Sec. \ref{sec:partitioned-parallelism}, PARTITIONED-QUEUE and HYBRID-QUEUE, that help achieve partitioned parallelism. Specifically, we compare their performance during load imbalance and in terms of latency with the constraint of ordered processing. Both schemes behave similarly in terms of per-operator throughput under uniform distribution, but hybrid scheme performs better in longer pipeline queries as it is more amenable to pipeline parallelism.

\subsubsection{Load Balancing}
\begin{figure}
	\centering
	\includegraphics[scale=0.55]{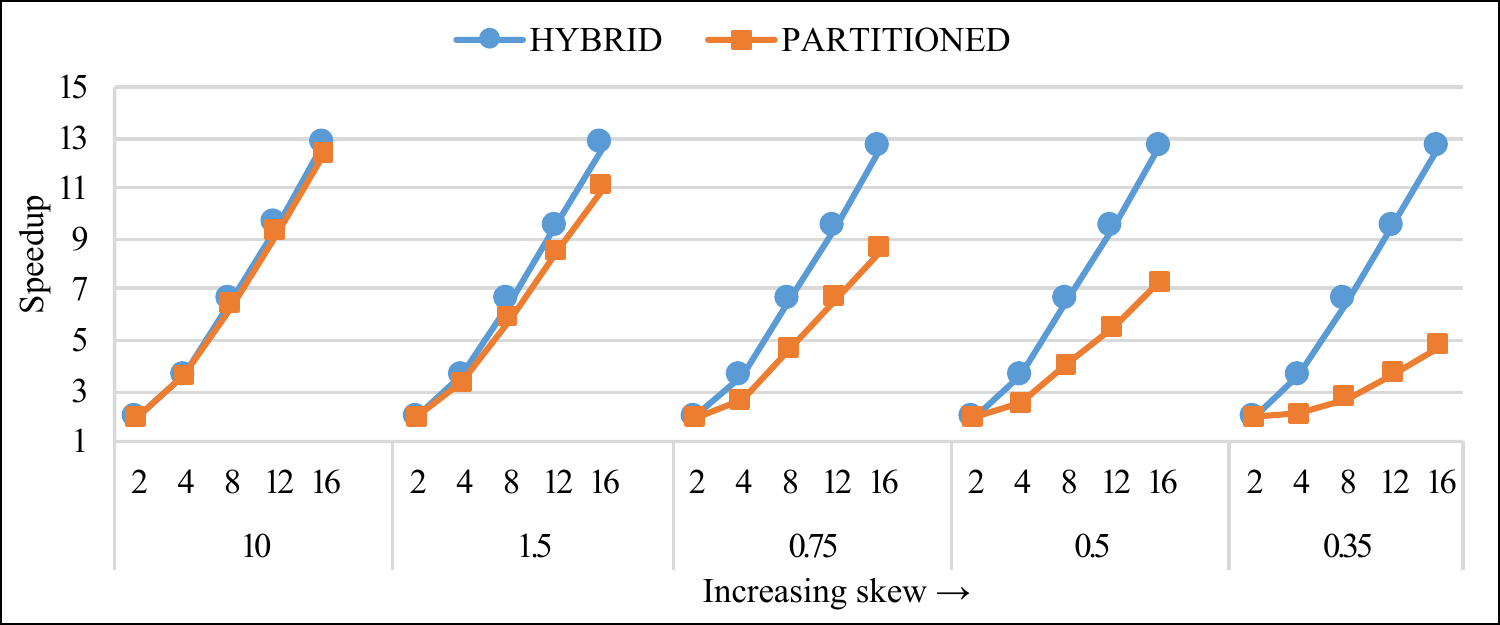}
	\caption{Handling load imbalance across partitions}
	\label{fig:partitioning-schemes-load-imbalance}
\end{figure}

\begin{figure}
	\centering
	\includegraphics[scale=0.55]{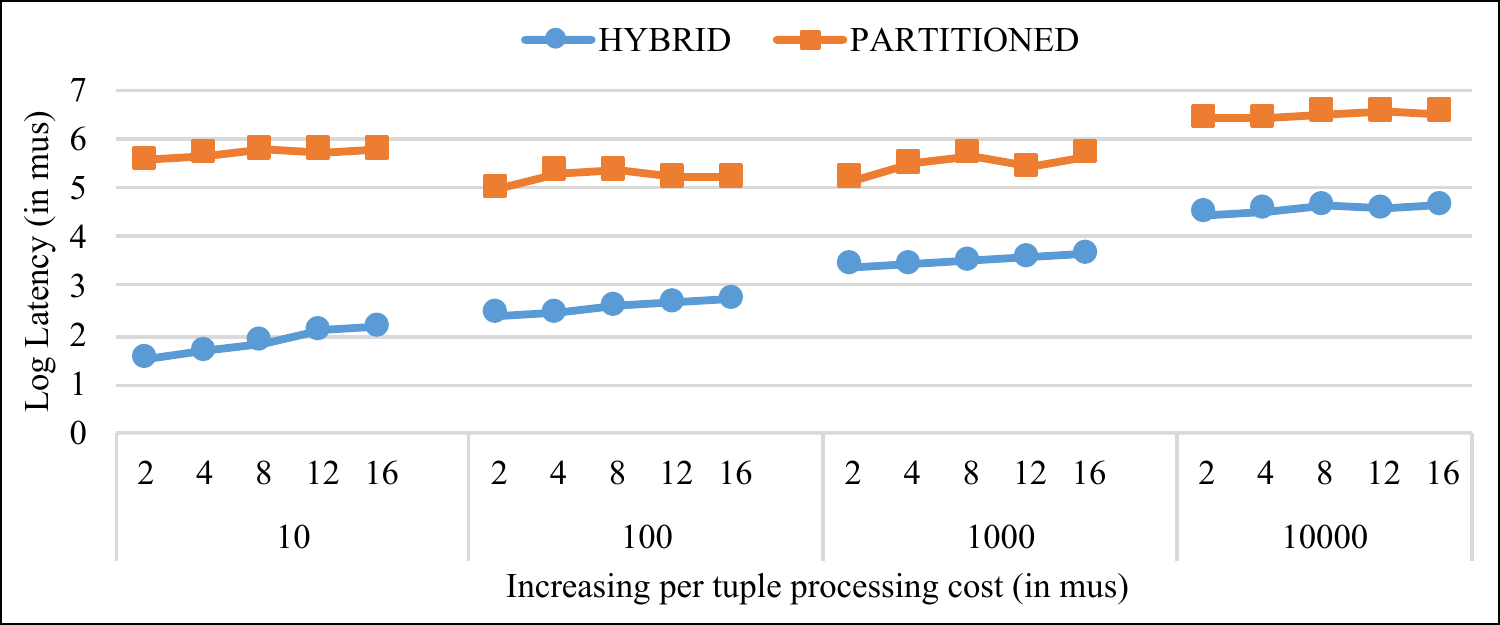}
	\caption{Latency vs. Per-Tuple Processing Cost}
	\label{fig:partitioning-schemes-latency}
\end{figure}
Skewed distribution is known to highly limit partitioned parallelism. It is especially important to be able to balance load across workers in the stream processing setting as they are expected to be long running continuous queries. In this experiment, we provide empirical evidence that the HYBRID-QUEUE approach can handle load imbalance much better than the PARTITIONED-QUEUE approach. In order to systematically induce skew in the distribution, we do range partitioning on  keys sampled from a Gaussian distribution. We scale values in $[-1, 1]$ generated by $\mathcal{N}(0, \sigma)$ appropriately to fit the key space. We vary the value of $\sigma$ to vary the skew across partitions - higher the value of $\sigma$, closer the distribution is to a uniform distribution. The maximum number of partitions for PARTITIONED-QUEUE is limited to the number of workers, while the number of partitions in the HYBRID-QUEUE allows finer partitions and so is set to 100. The results of this experiment are presented in figure \ref{fig:partitioning-schemes-load-imbalance}. We observe that both schemes perform similarly when the distribution is almost uniform. However, as we increase skew in the distribution, HYBRID-QUEUE performs consistently while scalability of the PARTITIONED-QUEUE approach drops heavily. This is because HYBRID-QUEUE admits finer partitions and hence leads to better load balancing.

\subsubsection{Latency} 
We now compare the average processing latency in either schemes - the time between start of processing an input to the time at which its outputs exit the operator through the reordering scheme. We observe this for operators with various per-tuple processing costs (10, 100, 1000 and 10000 micro seconds) and a uniform distribution of tuples across partitions - the results are presented in figure \ref{fig:partitioning-schemes-latency}. We can see that the average processing latency is much higher for PARTITIONED-QUEUE, while for HYBRID-QUEUE it is close to the corresponding operator's per-tuple processing cost. This is because the outputs produced through the PARTITIONED-QUEUE approach has to wait longer in the reordering buffer for outputs with earlier serial numbers. We do not report throughput comparisons between the two schemes here as both yield similar throughputs due to a uniformly random distribution of keys. However, this difference in individual operator processing latency leads to throughput differences in larger pipeline queries as we will see in the next experiment.

\subsubsection{Pipeline queries}
\begin{figure}
	\centering
	\includegraphics[scale=0.55]{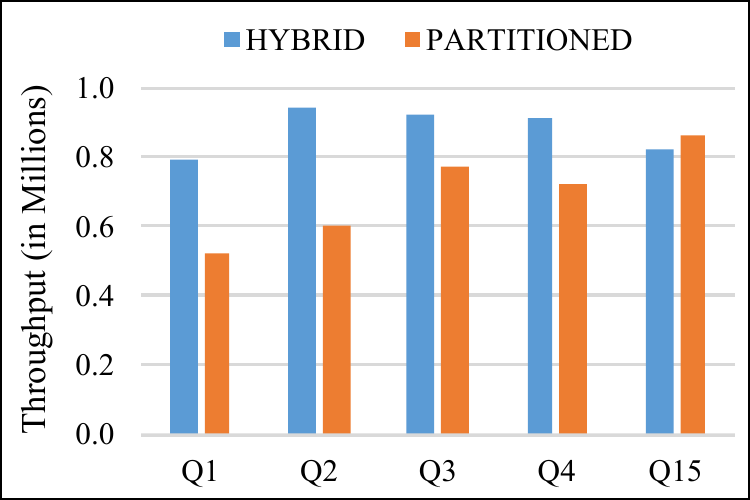}
	\includegraphics[scale=0.55]{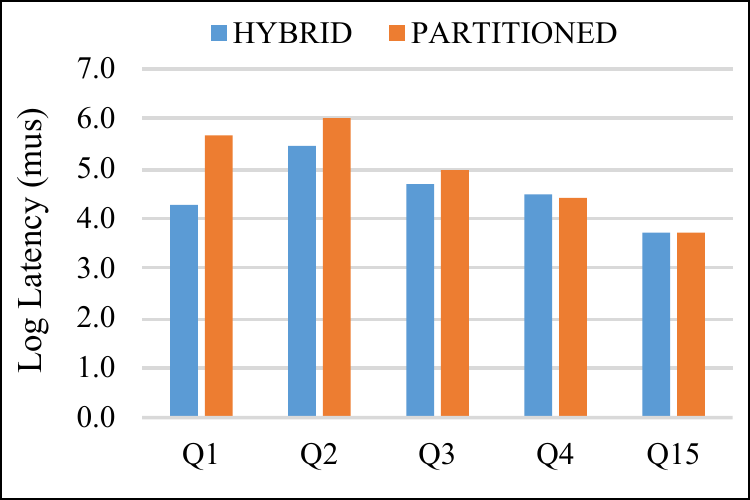}
	\caption{Peak throughput (left) and latency (right) of the two partitioning schemes on TPCx-BB queries}
	\label{fig:partitioned-scheme-bigbench}
\end{figure}
We compare performance of the two approaches on the TPCx-BB queries we described above. We use the CT scheduling heuristic, which yields the best performance among all the heuristics, and change only the partitioning scheme keeping the rest of the framework same. Peak throughput and latencies are reported in figure \ref{fig:partitioned-scheme-bigbench} as we vary the number of workers from 2 to 16. HYBRID-QUEUE is able to achieve much higher throughput than PARTITIONED-QUEUE in all queries. As expected, the difference is higher for queries that have more partitioned stateful operators. Query 15 contains only one such operator and is partitioned on item category id, where the total number of item categories in TPCx-BB is 10. It does not support higher degrees of parallelism and so the difference is unclear. HYBRID-QUEUE performs better than PARTITIONED-QUEUE also in terms of latency in 3 out of 5 queries, which have more partitioned stateful operators and almost similar for the rest. 

\subsection{Comparison of Reordering Schemes}
We report the results of our empirical evaluation comparing the NON-BLOCKING scheme (fig. \ref{alg:non-blocking-reordering-scheme})  and the LOCK-BASED scheme (fig. \ref{alg:naive-reordering-scheme}) in this subsection. We specifically highlight scenarios which are seen to be important in real-world queries from TPCx-BB using micro-benchmark experiments and also support it by evaluating them on the pipeline queries themselves.

\subsubsection{Light-weight Operators} \label{sec:reordering-small-op}  When the per-tuple processing cost of a stateless or partitioned-stateful operator is large and its computation profile is amenable to parallelization, the overhead of reordering outputs is relatively smaller and hence does not impede scalability of the operator. However, when this quantity is small, reordering could potentially become a huge bottleneck. We demonstrate that our NON-BLOCKING strategy minimizes this overhead leading to better scalability of such operators.  We designed a stateless parametric operator with a per-tuple processing cost in the order of 10s of microseconds on a single core serial execution. Now, we varied the degree of parallelism of this operator and observed the increase in average per-tuple processing cost and the corresponding speedup achieved (fig. \ref{fig:reordering-small-op}). Higher the reordering overhead, higher the average per-tuple processing cost and lower the speedup achieved. The results show that NON-BLOCKING reordering scheme scales better than  the LOCK-BASED scheme. As expected, the average per-tuple processing cost of the operator, which includes the time for which a worker is blocked, increases more steeply for the LOCK-BASED strategy due to unnecessary blocking of workers when another worker is sending outputs downstream. This is avoided in our improved non-blocking design.

\begin{figure}
	\centering
	\includegraphics[scale=0.55]{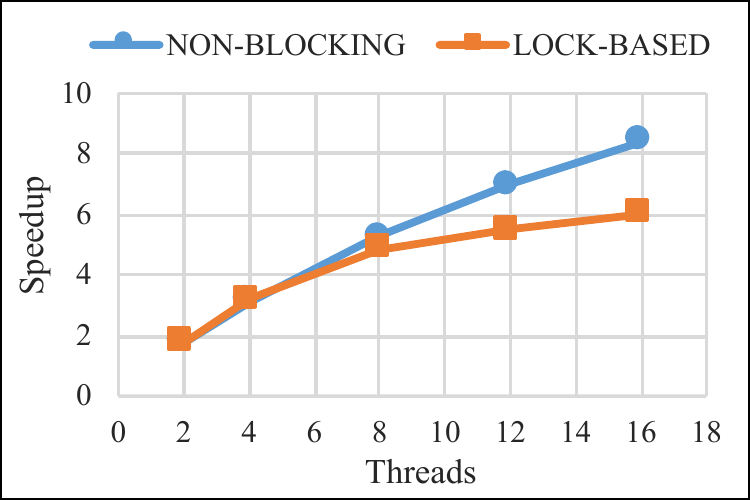}
	\includegraphics[scale=0.55]{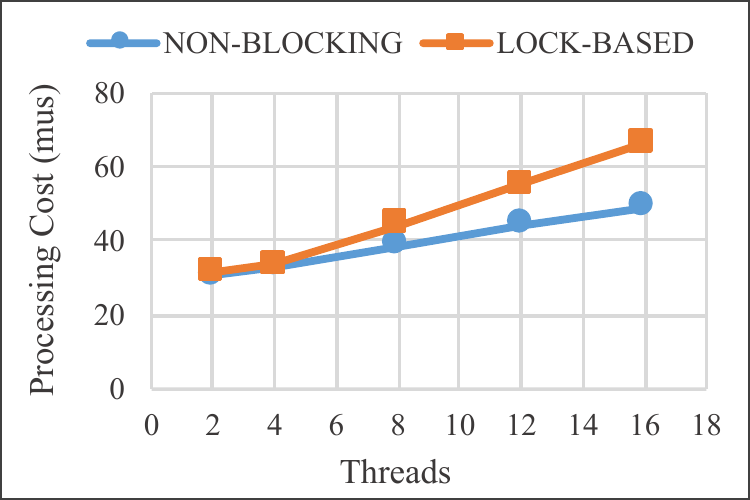}
	\caption{(a) Speedup  and (b) average processing cost for a light-weight operator}
	\label{fig:reordering-small-op}
\end{figure}

\begin{figure}
	\centering
	\includegraphics[scale=0.55]{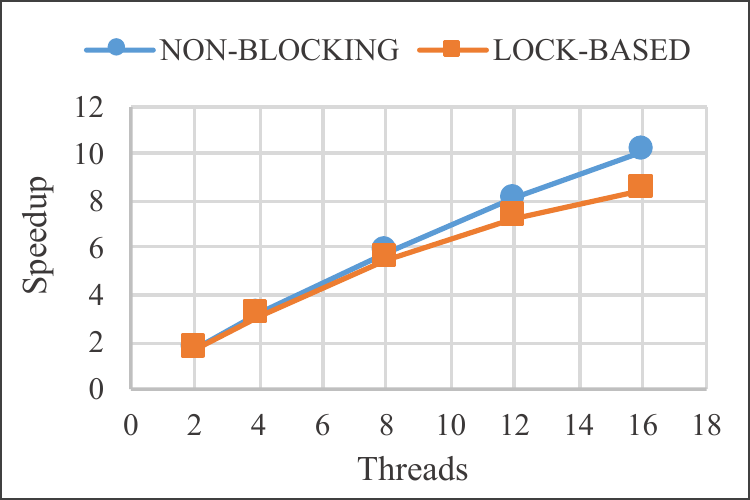}
	\includegraphics[scale=0.55]{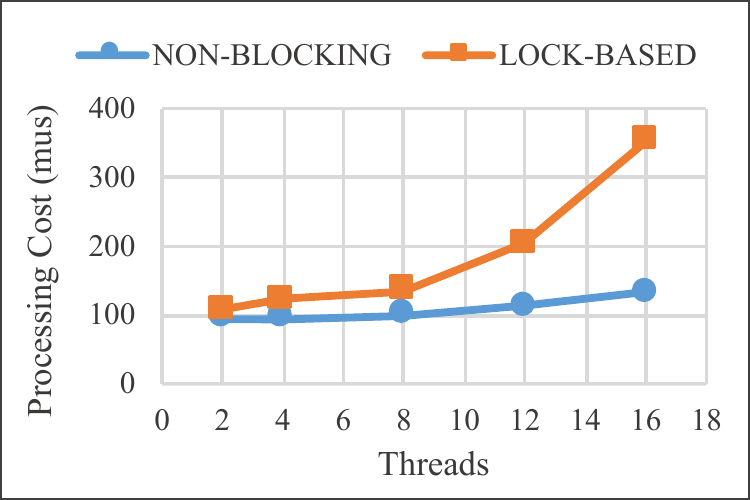}
	\caption{(a) Speedup  of the pipeline and (b) average processing cost of the first operator with LOCK-BASED and NON-BLOCKING reordering schemes}
	\label{fig:selectivity-pipeline}
\end{figure}
\subsubsection{High Selectivity Operators}
Similarly, when these operators have a huge selectivity (number of outputs per input),  the amount of serial overhead involved in reordering is higher. In such cases, NON-BLOCKING strategy performs better in comparison to LOCK-BASED, even for operators with larger computation sizes . To illustrate this, we construct a pipeline that consists of two operators, a parametric stateless operator that is followed by a partitioned stateful operator. We use operators with a processing cost of approximately 100$\mu$s and the stateless operator has a selectivity of 50. Such high selectivity is not uncommon in real workloads. For example in query Q2, all clickstreams in a session are analyzed to produce a large number of item pairs viewed together. We report the average per-tuple processing cost and speedup achieved for this pipeline query in figure \ref{fig:selectivity-pipeline}. Every tuple in the batch of outputs generated by the stateless operator has to be added into the appropriate queue of the partitioned stateful operator. To ensure ordering constraints, this operation is performed serially, which leads to blocking of workers in LOCK-BASED strategy, while in our scheme this is avoided. 
\begin{figure}
	\centering
	\includegraphics[scale=0.7]{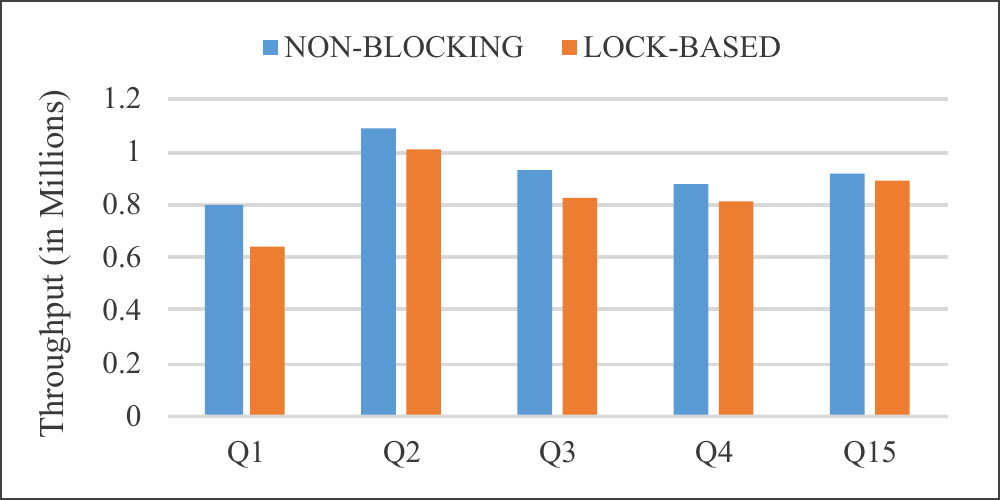}
	\caption{Peak throughput comparison of the reordering schemes on TPCx-BB queries}
	\label{fig:reordering-scheme-bigbench}
\end{figure}

\subsubsection{Pipeline Queries}
To further validate the benefits of our non-blocking reordering scheme, we compare it against the LOCK-BASED scheme on TPCx-BB queries. We report the peak throughput of the runtime for each of the queries using the best heuristic (CT) and by varying just the reordering scheme in fig. \ref{fig:reordering-scheme-bigbench}. We can clearly see that NON-BLOCKING scheme consistently yields a better throughput than the LOCK-BASED strategy. They do not differ much in processing latency  and hence we do not present them here. 

\section{Related Work}
\label{sec:related-work}
In this section, we review prior work related to concurrent data structures that we designed for ordered processing and scheduling of streaming computations. 

\subsection{Concurrent Data Structures}
Our system uses several non-blocking concurrent data structures that have been proposed in the literature~\cite{MauriceHerlihyBook} such as single-producer single-consumer FIFO queues and multi-producer, multi-consumer queues. The reordering scheme we presented has very specific requirements (non-blocking, low-latency buffering), which are not directly met by any other data structure. The pre-allotted monotonically increasing serial numbers enabled further optimizations that would be inaccessible to a generic data structure such as concurrent priority queues. 
\par
Most partitioned parallelism implementations are based on the partitioned-queue approach we presented in Sec. \ref{sec:partitioned-parallelism}, initially proposed in the Volcano~\cite{Volcano} model of query evaluation for databases. In such a design, the degree of parallelism associated with the operator is determined statically and cannot be controlled by a dynamic scheduler. In case of shared-nothing architectures, some techniques~\cite{Flux2003:ICDE} exist that adaptively repartitions the query during runtime.  The trade-offs with respect to communication and repartitioning overheads are very different in a shared-memory architecture, so those techniques do not apply here directly. In addition, we address partition parallelism in the presence of ordering constraints, which to the best of our knowledge, none of the existing concurrent data structures address. 

\subsection{Static Scheduling} Static schedulers assume that the per-tuple processing cost and selectivity of the operators are known at compile time. Early streaming systems designed for applications from the digital signal processing domain focused on compiling down synchronous dataflow graphs (SDF), to single and multicores \cite{Battacharyya:1996, Battacharyya:1999}. For their application domain a purely static solution is not unreasonable as operator characteristics are largely fixed. There is a huge body of literature on scheduling SDF graphs to optimize various metrics such as throughput, memory and cache locality \cite{SDF1, SDF2, SDF3}. StreamIt \cite{StreamIt}, Brook \cite{Brook} and Imagine \cite{Imagine} are some of the early systems designed based on this model of execution. However, none of these works address the case when operator characteristics change during runtime. 

\subsection{Dynamic Solutions}
Aurora \cite{Aurora}, its distributed counterpart, Borealis \cite{Borealis} and STREAM \cite{STREAM} are some of the early prototypes of stream processing engines that make dynamic scheduling decisions. Many recent stream processing engines (NaiagraST \cite{NiagraST}, Nile \cite{Nile}, Naiad \cite{Naiad},Spark Streaming\cite{SparkStreaming}, Storm \cite{Storm}, S4 \cite{S4}) also scheduling decisions during runtime. All these systems either focus on single core or shared-nothing architectures. Even distributed solutions composed of individual shared-memory multicores consider each core as a separate executor and hence fail to exploit the advantages of a fast shared-memory. 
\par
IBM Streams~\cite{IBMStreams} is one of the systems that target shared-memory architecture. Their runtime focuses on two issues: First, they design a mechanism to dynamically determine the maximum number of cores needed by a pipeline. This work is orthogonal to our work and can be easily adapted to our system. Second, they design a scalable scheduler that can schedule a large pipeline on a multicore; the focus is on scalability of the scheduler (number of scheduling decisions made) and not necessarily overall performance of the pipeline as they assume manual fine-tuning. Another key difference is that their scheduler works on an expanded pipeline, where each logical operator is duplicated a number of times specified through user annotations. This limits the flexibility of scheduler while also increasing the scheduling overhead. Their system also does not natively support totally ordered processing making a direct comparison infeasible.   
\par
Other systems such as TRILL~\cite{Trill} and Spark Streaming~\cite{SparkStreaming} are based on the micro-batch architecture.  The idea is to execute a batch of inputs on an operator to completion before starting the next operator, thus relying primarily on the (data) parallelism within an operator. At any given time, a bulk of the workers are involved in executing instances of a single operator. Batching of streams is known to increase latency. We believe that such systems can be built on top of our parallelization and scheduling framework without much effort. We also note that several architectural proposals~\cite{CacheConscious,  RaftLib, AutoPipe, OnTheFly} exist in the literature for a shared-memory streaming parallelization framework, but none of them address dynamic scheduling in the ordered setting or compare different scheduling heuristics empirically, which is a key contribution of this paper. 
\par
The approach we present in this paper is based on dynamic scheduling. Process/thread scheduling in operating systems is an example of this type of scheduling. We seek to develop a customized solution for the streaming setting taking advantage of the extra information available in the form of a dataflow graph. Further in a typical task graph, total amount of work to be done is fixed and the scheduler just needs to pick the right order once whereas in a streaming setting the scheduler has to continuously choose based on the status of pipeline. So, classical notions like work stealing do not apply to our setting \cite{IBMStreams}.
\section{Conclusions and Future Work}
\label{sec:conclusion}
We presented a new design for a dynamic runtime that executes streaming computations on a shared-memory multicore machine, along with the guarantee of ordered processing of tuples. We empirically demonstrated that our runtime is able to achieve good throughput (in millions of tuples per second) without compromising on the latency (a few milliseconds) on some TPCx-BB queries. We presented a couple of concurrent data structures that help achieve data and partitioned parallelism in the ordered setting, proved their correctness and showed their usefulness empirically using micro-benchmarks and on TPCx-BB queries. 
\par
In our current scheme, we assume all worker threads are uniform. However, in reality a worker is closer to some workers than others due to the hierarchical cache architecture and more so in modern non-uniform memory access (NUMA) architectures. An important extension to our work is to design scheduling heuristics that discriminate workers based on their spatial distribution.
\bibliographystyle{ACM-Reference-Format}
\bibliography{references}
\end{document}